\documentclass[journal,onecolumn]{IEEEtran}

\usepackage{amsmath}
\usepackage{amssymb}
\usepackage{mathtools}
\usepackage{cite}

\newtheorem{theorem}{Theorem}
\newtheorem{proposition}[theorem]{Proposition}
\newtheorem{corollary}[theorem]{Corollary}
\newtheorem{lemma}[theorem]{Lemma}

\DeclareMathOperator{\Tr}{Tr}
\DeclareMathOperator{\Rad}{Rad}
\DeclareMathOperator{\Core}{Core}

\newcommand{\Lfield}{\mathbb L}

\begin{document}

\title{Rank-Two Frobenius-Linearized Normal Forms and Orthoderivative Dual Coordinates in Quadratic APN Maps}

\author{Jingchuan Ma, Yanhua Liu, and Qiaoyun Huang%
\thanks{Jingchuan Ma is with the Department of Computer Engineering, Fuzhou University Zhicheng College (e-mail: kiciot@qq.com; ORCID: 0009-0001-3703-471X).}%
\thanks{Yanhua Liu is with Fuzhou University Zhicheng College (e-mail: Lyhwa@fzu.edu.cn; ORCID: 0000-0002-6076-9968).}%
\thanks{Qiaoyun Huang is with Fuzhou University Zhicheng College (e-mail: huangqy@fdzcxy.edu.cn; ORCID: 0009-0008-2122-7307). Qiaoyun Huang is the corresponding author.}%
\thanks{This research received no external funding.}}

\maketitle

\begin{center}
\footnotesize
This work has been submitted to the IEEE for possible publication.\\
Copyright may be transferred without notice, after which this version may no longer be accessible.
\end{center}

\begin{abstract}
We classify binary-linear two-term Frobenius-linearized operators
$L(Y)=AY^\sigma+BY$ on $K^3$, where $K$ is a finite extension of
$\mathbb F_2$ and $\sigma$ is a fixed nontrivial Frobenius automorphism of
$K$ with fixed field $\mathbb F_2$.  Under a coefficient-rank and binary-kernel
condition, if $A$ and $B$ both
have $K$-rank two and $L$ has a
one-dimensional kernel over $\mathbb F_2$, then invertible $K$-linear input
and output changes reduce $L$, for this fixed $\sigma$, to the single canonical model
$(\alpha,\beta,\gamma)\mapsto
(\alpha^\sigma+\alpha,\beta^\sigma,\gamma)$.  The proof constructs the
coordinate frames from the two coefficient-kernel directions and the binary
kernel.  In these coordinates, the first dual output row is exactly the
unique nonzero trace-adjoint normal, with an exact $K$-valued normalization.
For pure $\sigma$-quadratic almost perfect nonlinear maps, this identifies
the orthoderivative by $\pi_F(X)^TF(X)=1$; in odd extension degree it also
yields permutation behavior and a bijection from the projective plane to its
dual.  The triprojective construction of Göloğlu and Kölsch and the cubic
norm-twist construction of Li, Zhou, Li, and Qu provide two realizations
arising from different algebraic constructions.  The triprojective case further
admits a determinant factorization and a complete dual frame, whereas the
norm-twist realization shows that the pure-map consequences do not follow
from the operator theorem alone.  A natural Gold representation has
coefficient-rank pair $(3,3)$, delimiting the rank-two subclass.  The normal
form also supplies exact extension-field labels for known component-radical
and Walsh-support relations.
\end{abstract}

\begin{IEEEkeywords}
APN functions, finite fields, orthoderivatives, linearized operators, normal forms
\end{IEEEkeywords}

\section{Introduction}
\label{sec:introduction}

\IEEEPARstart{Q}{uadratic} almost perfect nonlinear (APN) maps carry two
geometries that are usually described at different levels.  Binary
polarization gives, for each nonzero direction $X$, a binary-linear map
$D_XF$ with kernel $\mathbb F_2X$ and image a binary trace hyperplane.  Its
unique nonzero trace normal is the orthoderivative $\pi_F(X)$, an invariant
that also controls radicals of Boolean components and their Walsh support
\cite{CarletCharpinZinoviev1998,CanteautCouvreurPerrin2022}.  Many explicit
APN constructions, however, are written on a lower-dimensional vector space
over an extension field $K$.  In such presentations the same derivative may
split into a $\sigma$-semilinear term and a $K$-linear term.  The binary
trace-hyperplane description alone does not determine the $K$-kernel geometry
of these two coefficient matrices, nor does it explain when that
extension-field geometry forces a canonical normal form and an exact $K$-valued
dual coordinate.

This paper resolves that interaction for a three-coordinate rank-defect
pattern arising naturally in the APN constructions studied below.  Let $K=\mathbb F_{2^m}$, fix a nontrivial Frobenius automorphism
$\sigma$ with fixed field $\mathbb F_2$, and consider
\begin{equation}
 L(Y)=AY^\sigma+BY,\qquad Y\in K^3.
 \label{eq:intro-operator}
\end{equation}
Assume
\begin{equation}
 \operatorname{rank}_K A=\operatorname{rank}_K B=2,
 \qquad \dim_{\mathbb F_2}\ker L=1.
 \label{eq:intro-rank-nullity}
\end{equation}
The central result is a normal-form classification: every operator satisfying
\eqref{eq:intro-rank-nullity} is, up to invertible $K$-linear changes of
input and output coordinates, equivalent to
\begin{equation}
 N_\sigma(\alpha,\beta,\gamma)
 = (\alpha^\sigma+\alpha,\beta^\sigma,\gamma).
 \label{eq:intro-canonical-model}
\end{equation}
Conversely, every left-right $K$-linear transform of $N_\sigma$ has
coefficient-rank pair $(2,2)$ and binary nullity one.  Thus the rank condition
is not merely a convenient way to construct moving frames: together with the
binary kernel condition it characterizes a single left-right equivalence
class of two-term Frobenius-linearized operators for this fixed $\sigma$.

The proof exposes the mechanism behind the classification.  If $X$ generates
the binary kernel, $R$ generates $\ker_KB$, and $S^\sigma$ generates
$\ker_KA$, then $X,R,S$ are forced to be a $K$-basis.  Their images produce a
second $K$-basis, giving matrices $E,J\in\operatorname{GL}(3,K)$ such that
\begin{equation}
 J^{-1}\circ L\circ E=N_\sigma.
 \label{eq:intro-normal-form}
\end{equation}
The first row of $J^{-1}$ is exactly the unique nonzero trace-adjoint normal
to the binary image of $L$, and its pairing with the first output-frame vector
is one in $K$.  This is the bridge between the two geometries: the binary
image remains the trace hyperplane
\[
 \{Z:\Tr(n^TZ)=0\},
\]
whereas the same vector $n$ defines the $K$-linear covector
$Z\mapsto n^TZ$ and hence a point $[n]$ of the dual $K$-projective plane.
These objects are related but should not be conflated: the derivative image
is a binary trace hyperplane, not a $K$-projective line.

For a pure $\sigma$-quadratic map $F:K^3\to K^3$, polarization has the form
\eqref{eq:intro-operator} in every direction and the two coefficient halves
both evaluate to $F(X)$ at the derivative direction.  Therefore, if $F$ is
APN and every nonzero derivative has coefficient-rank pair $(2,2)$, the
normal form identifies the orthoderivative by the exact field identity
\begin{equation}
 \pi_F(X)^TF(X)=1\qquad\text{in }K.
 \label{eq:intro-exact-normalization}
\end{equation}
When $m$ is odd, its trace is one.  This places $F(X)$ outside the binary
derivative image, yields a direct permutation proof, and combines with pure
semiquadratic homogeneity and the standard odd-dimensional orthoderivative
bijection to give
\begin{equation}
 [X]\longmapsto[\pi_F(X)]
 \label{eq:intro-projective}
\end{equation}
as a bijection from $\operatorname{PG}(2,K)$ to its dual plane.  Here
$\pi_F(X)$ is first the binary trace normal in the derivative-image equation;
its $K^\times$-class simultaneously defines the projective dual coordinate.

Two APN constructions show that the operator class has content beyond a
single family or a reverse-engineered definition.  The triprojective family
of Göloğlu and Kölsch is pure $\sigma$-quadratic and realizes the full pure
specialization \cite{GologluKolschTriprojective2026}.  Explicit adjugate
factorizations verify the rank-$(2,2)$ condition, and additional
family-specific contractions give
\begin{equation}
 \det J_X=(\det E_X)^{\sigma+1}
 \label{eq:intro-determinant}
\end{equation}
and all three rows of $J_X^{-1}$.  By contrast, the Li--Zhou--Li--Qu cubic
norm-twist construction arises from a different extension-field mechanism
\cite{LiZhouLiQu2022}.  Its derivative again has coefficient-rank pair
$(2,2)$ and hence the same abstract normal form, but its first output-frame
vector is $b(HX,X)$ rather than $F(X)$.  Consequently the theorem yields
$\pi_F(X)^Tb(HX,X)=1$, not automatically
\eqref{eq:intro-exact-normalization}.  The second realization therefore
separates the general operator theorem from the stronger pure-map corollary
instead of merely duplicating it.  Admissible parameters for the relevant
norm-twist subfamily exist in every stated extension degree by
\cite{BartoliCalderiniPolverinoZullo2022}.

The coefficient-rank hypothesis is also a genuine structural separator, not
an automatic consequence of the APN property.  In the natural cubic-extension
presentation of a Gold map, both derivative coefficient operators are
invertible, giving rank pair $(3,3)$ rather than $(2,2)$.  Thus the two
positive realizations and the Gold boundary place the normal-form theorem
inside a proper representation-dependent subclass of quadratic APN
constructions.

On the Fourier side, no new abstract support mechanism is claimed.  The
classical derivative-incidence theory already determines when a Walsh
coefficient can be nonzero and gives the known almost-bent amplitude in odd
dimension \cite[Sec.~4.2 and Th.~13(v)]{CarletCharpinZinoviev1998}.  What the
normal form contributes is the exact extension-field label of the relevant
binary trace hyperplane: for the pure class the component indexed by
$\pi_F(X)$ has radical $\mathbb F_2X$, and
\begin{equation}
 W_F(U,\pi_F(X))\ne0
 \quad\Longleftrightarrow\quad \Tr(U^TX)=1.
 \label{eq:intro-walsh}
\end{equation}

The contributions are organized in three layers.
\begin{enumerate}
 \item \emph{General theory:} we classify all operators
 $AY^\sigma+BY$ on $K^3$ with coefficient-rank pair $(2,2)$ and binary
 nullity one up to left-right $K$-linear equivalence, and identify the first
 dual coordinate as the exactly normalized trace-adjoint normal.
 \item \emph{APN consequences and realizations:} for pure
 $\sigma$-quadratic APN maps the classification gives
 \eqref{eq:intro-exact-normalization}, odd-degree permutation behavior, and
 an orthoderivative dual-coordinate bijection.  The Göloğlu--Kölsch and
 Li--Zhou--Li--Qu constructions realize the general theorem through two
 distinct algebraic mechanisms, with the latter marking the boundary of the
 pure corollary.
 \item \emph{Sharper specialization and boundaries:} the triprojective
 family admits explicit adjugates, the determinant identity
 \eqref{eq:intro-determinant}, and a complete dual frame; the same exact
 normal gives the extension-field label for known radical/Walsh relations;
 and the natural Gold representation supplies a full-rank scope separator.
\end{enumerate}

Section~\ref{sec:related} positions the normal-form claim against the
orthoderivative, field-reduction, semilinear, and linearized-map literature.
Section~\ref{sec:preliminaries} fixes notation.  Sections~\ref{sec:frames}
and~\ref{sec:apn} prove the classification and its pure APN consequences.
Sections~\ref{sec:triprojective} and~\ref{sec:norm-twist} establish the two
realizations.  Section~\ref{sec:walsh} records the Fourier-side consequences
and the Gold boundary.  Appendix~\ref{app:n9} gives a finite cross-family
separation in the first common binary dimension; it is not used to support
the normal-form classification.

\section{Related Work}
\label{sec:related}

\subsection{Quadratic APN Maps, Orthoderivatives, and Walsh Support}

Carlet, Charpin, and Zinoviev connect APN maps, associated binary codes,
bent functions, and Walsh data~\cite{CarletCharpinZinoviev1998}.  In their
notation, Theorem~13(v) describes the bent dual of the derivative-image
incidence function, while Section~4.2 attaches a unique permutation to a
quadratic APN permutation.  Identifying that permutation with the modern
normal to a derivative image gives the abstract nonzero-Walsh criterion used
in Section~\ref{sec:walsh}.  We therefore claim neither that support
mechanism nor the odd-dimensional almost-bent amplitude as new.  Recent work
on Walsh spectra of quadratic APN functions emphasizes component ranks and
the even-dimensional case~\cite{BeneteauGoluboffKolschVaghasiya2026}.

Canteaut, Couvreur, and Perrin introduced the orthoderivative as an effective
EA-equivalence invariant for quadratic APN maps and developed algorithms for
recovering or testing EA equivalence~\cite{CanteautCouvreurPerrin2022}.
Couvreur, Canteaut, and Perrin subsequently generalized orthoderivatives and
related them to cofactor matrices of binary Jacobians
\cite{CouvreurCanteautPerrin2024}.  These works provide binary existence,
uniqueness, covariance, and computational tools.  Kaleyski develops a broader
invariant framework for deciding EA equivalence~\cite{Kaleyski2022}, and
Yoshiara proves that EA and CCZ equivalence coincide for quadratic APN
functions~\cite{Yoshiara2012}.  The normal-form theorem here addresses a
different layer: after choosing a three-coordinate $K$-module structure, the
separate rank defects of the two coefficient matrices, together with binary
nullity one, force the canonical two-term Frobenius-linearized normal form and fix its
trace normal by a $K$-valued pairing.

This distinction also separates two geometric objects that can otherwise be
blurred.  Orthoderivative theory gives the binary hyperplane
\[
 \operatorname{im}_{\mathbb F_2}D_XF
 =\{Z:\Tr(\pi_F(X)^TZ)=0\}.
\]
The same vector defines the $K$-linear covector
$Z\mapsto\pi_F(X)^TZ$, whose $K^\times$-class is a point in the dual
$K$-projective plane.  The latter interpretation is what is used in the dual-coordinate
bijection; the derivative image itself is not asserted to be a $K$-linear plane.

\subsection{Triprojective and Cubic Norm-Twist Constructions}

Göloğlu and Kölsch construct triprojective APN functions from projectively
semiquadratic maps and a root-free twisted-polynomial condition
\cite{GologluKolschTriprojective2026}.  Their results establish differential
uniformity and, in the relevant odd-degree case, permutation behavior.  A
companion paper relates projective-plane bijections to commutative
semifields~\cite{GologluKolschSemifields2026}.  The present paper uses this
family as one realization of the rank-$(2,2)$ operator class, while proving
the normal form without its coefficients.  Conversely, the determinant and
full dual-frame identities in Section~\ref{sec:triprojective} depend on
coefficient contractions specific to this family and are not consequences
of the abstract classification.

Li, Zhou, Li, and Qu introduced the cubic norm-twist APN construction and
proved the APN and semilinear-permutation facts used in
Section~\ref{sec:norm-twist}~\cite{LiZhouLiQu2022}.  Bartoli, Calderini,
Polverino, and Zullo proved infiniteness of its admissible parameter family
\cite{BartoliCalderiniPolverinoZullo2022}.  The rank-$(2,2)$ derivative split
proved here gives a second route into the same normal-form class, but its
first output-frame vector is not generally $F(X)$.  This is precisely why the
general operator classification and the pure $\sigma$-quadratic APN
corollary are distinct levels of the theory.  The construction is not used
to claim an infinite EA/CCZ separation from the triprojective family; the
finite computation in Appendix~\ref{app:n9} is stated with its explicit
quantifier.

\subsection{Field Reduction, Linearized Maps, and Frobenius-Twisted Operators}

Field reduction and Desarguesian spreads provide the standard language for
passing between extension-field subspaces and binary projective geometry
\cite{VanDeVoorde2016}.  In particular, nondegeneracy of the trace gives the
generic identity
\[
 \Core_K\{Y:\Tr(v^TY)=0\}=\ker_K(v^T).
\]
This identity recovers the largest $K$-linear subspace inside a binary trace
hyperplane, but it does not produce the two coefficient-kernel directions or
the left-right normal form \eqref{eq:left-right-normal-form}.

The finite-field semilinear literature includes normal forms for globally
semilinear transformations~\cite{DempwolffFisherHerman2000}.  For linearized
maps, Berson develops a multivariate polynomial-map framework with a
polynomial-matrix representation~\cite{Berson2014}, while Wu and Liu study
the algebra of linearized polynomials through Dickson matrices and adjugates
\cite{WuLiu2013}.  Kernel and root criteria for sums of Frobenius powers are
also developed in the linearized/projective-polynomial setting
\cite{McGuireSheekey2019}.  These works provide general algebraic
representations and structural tools for linearized maps.  The operator
considered here is the binary-linear two-term form $AY^\sigma+BY$: its first
summand is $\sigma$-semilinear and its second is $K$-linear, with separate
coefficient-rank defects.  The classification depends simultaneously on those
two $K$-rank defects and on the binary nullity of the sum.  We are not aware
of a result giving the same rank-$(2,2)$/binary-nullity-one left-right normal
form under these hypotheses.

\section{Preliminaries}
\label{sec:preliminaries}

Let $K=\mathbb F_q$, where $q=2^m$ and $m>1$.  Fix an integer
$k$ with $1\leq k<m$ and $\gcd(k,m)=1$, and write
\[
 \sigma:x\longmapsto x^{2^k},\qquad
 \tau:x\longmapsto x^{2^{m-k}}.
\]
Thus $\sigma$ and $\tau$ are inverse automorphisms of $K$, and the fixed
field of $\sigma$ is $\mathbb F_2$.  In exponent notation, $\sigma^2$ denotes
the second Frobenius iterate and expressions involving $\sigma$ denote the
corresponding integer exponents; for example,
$ x^{\sigma+1}=x^{2^k+1}$ and
$ x^{\sigma^2+\sigma+1}=x^{2^{2k}+2^k+1}$.  Vectors are columns,
$x^Ty$ is the ordinary $K$-valued dot product, and
$\Tr=\Tr_{K/\mathbb F_2}$ is the absolute trace.  We write $\operatorname{rank}_K$ when the base field matters explicitly;
after its first occurrence in a local argument, an unsubscripted matrix rank is
over $K$.

The trace pairing on $K^3$ is
\[
 \langle x,y\rangle_{\Tr}=\Tr(x^Ty).
\]
It is a nondegenerate $\mathbb F_2$-bilinear form.  If $L:K^3\to K^3$ is
$\mathbb F_2$-linear, then $L^*$ denotes its adjoint for this pairing.
Consequently,
\begin{equation}
 (\operatorname{im}L)^{\perp_{\Tr}}=\ker L^*.
 \label{eq:adjoint-image}
\end{equation}

We repeatedly use the following elementary trace fact.

\begin{lemma}
\label{lem:trace-image}
The map $h:K\to K$, $h(\alpha)=\alpha^\sigma+\alpha$, has kernel
$\mathbb F_2$ and image $\ker\Tr$.
\end{lemma}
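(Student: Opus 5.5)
The plan is to treat $h$ as an $\mathbb F_2$-linear endomorphism of $K$, compute its kernel directly, and then pin down its image by a dimension count together with one inclusion.

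\emph{Linearity.} Because $\sigma$ is a field automorphism, $h(\alpha+\beta)=h(\alpha)+h(\beta)$. Scalars in $\mathbb F_2$ are fixed by $\sigma$, so $h$ is $\mathbb F_2$-linear.

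\emph{Kernel.} In characteristic two, $h(\alpha)=0$ is equivalent to $\alpha^\sigma=\alpha$. The preliminaries record that the fixed field of $\sigma$ is $\mathbb F_2$. This is where $\gcd(k,m)=1$ enters: $\alpha^{2^k}=\alpha$ holds exactly when $\alpha\in\mathbb F_{2^{\gcd(k,m)}}=\mathbb F_2$. Hence $\ker h=\mathbb F_2$, and by rank--nullity $\dim_{\mathbb F_2}\operatorname{im}h=m-1$.

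\emph{Image.} We show $\operatorname{im}h\subseteq\ker\Tr$ using Frobenius invariance of the absolute trace. Since $\Tr(\beta)=\sum_{i=0}^{m-1}\beta^{2^i}$ and $\sigma$ is a power of the absolute Frobenius, we have $\Tr(\alpha^\sigma)=\Tr(\alpha)$. Therefore
\[
\Tr(\alpha^\sigma+\alpha)=\Tr(\alpha^\sigma)+\Tr(\alpha)=2\Tr(\alpha)=0.
\]
The trace $\Tr:K\to\mathbb F_2$ is surjective, because $\Tr$ is a polynomial of degree $2^{m-1}<2^m$ and so cannot vanish identically. Hence $\ker\Tr$ has $\mathbb F_2$-dimension $m-1$. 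Since $\operatorname{im}h$ is contained in $\ker\Tr$ and the two have the same dimension, they are equal.

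No step is a genuine obstacle. The only point that deserves explicit mention is the kernel computation, since it is the one place where the coprimality hypothesis $\gcd(k,m)=1$ is used, through the identity $\mathbb F_{2^k}\cap\mathbb F_{2^m}=\mathbb F_{2^{\gcd(k,m)}}$ inside an algebraic closure.
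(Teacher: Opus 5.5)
Your proposal is correct and follows the paper's proof: the kernel is the fixed field $\mathbb F_2$ of $\sigma$, Frobenius invariance of the trace gives $\operatorname{im}h\subseteq\ker\Tr$, and equality follows because both have binary dimension $m-1$. You also spell out the details the paper leaves implicit, namely $\mathbb F_2$-linearity of $h$, the role of $\gcd(k,m)=1$, and surjectivity of $\Tr$.
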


\begin{IEEEproof}
The kernel is the fixed field of $\sigma$, hence is $\mathbb F_2$.
Moreover, trace invariance under Frobenius gives
$\Tr(\alpha^\sigma+\alpha)=0$.  The image and $\ker\Tr$ both have binary
dimension $m-1$, so they are equal.
\end{IEEEproof}

\noindent\textit{Two-term decomposition lemma.}\quad
The nontriviality of $\sigma$ makes the $Y^\sigma$ and $Y$ coefficient
matrices intrinsic.  Indeed, suppose $C,D\in M_3(K)$ satisfy
$CY^\sigma+DY=0$ for every $Y\in K^3$.  For the $j$th standard basis vector
$e_j$, write $c_j,d_j$ for the corresponding columns and set $Y=te_j$.
Then $t^\sigma c_j+td_j=0$ for every $t\in K$.  Taking $t=1$ gives
$d_j=c_j$ in characteristic two, and hence
$c_j(t^\sigma+t)=0$ for every $t$.  Since $\sigma\ne\operatorname{id}_K$,
some $t$ satisfies $t^\sigma\ne t$, so $c_j=d_j=0$.  This holds for every
column, proving $C=D=0$.

Let $F:K^3\to K^3$ satisfy $F(0)=0$.  Its polar derivative is
\[
 D_XF(Y)=F(X+Y)+F(X)+F(Y).
\]
The map $F$ is quadratic if every $D_XF$ is $\mathbb F_2$-linear.  It is
almost perfect nonlinear (APN) if every affine derivative equation in a
nonzero direction has at most two solutions.  For quadratic $F$, this is
equivalent to
\begin{equation}
 \ker_{\mathbb F_2}D_XF=\mathbb F_2X
 \quad\text{for every }X\ne0.
 \label{eq:apn-nullity}
\end{equation}

We call $F$ \emph{pure $\sigma$-quadratic} if each coordinate is a
$K$-linear combination of monomials $x_i^\sigma x_j$.  By the preceding decomposition lemma, such a map has a
unique decomposition
\begin{equation}
 D_XF(Y)=A_XY^\sigma+B_XY,
 \qquad A_X,B_X\in M_3(K),
 \label{eq:pure-split}
\end{equation}
where $Y^\sigma$ means coordinatewise application of $\sigma$.

\begin{lemma}[Euler splitting]
\label{lem:euler-split}
For a pure $\sigma$-quadratic map,
\begin{equation}
 A_XX^\sigma=B_XX=F(X).
 \label{eq:euler-split}
\end{equation}
\end{lemma}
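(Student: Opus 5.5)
Write $F(X)=\sum_{i,j}x_i^\sigma x_j\,c_{ij}$ with $c_{ij}\in K^3$. The map is a sum of monomials $x_i^\sigma x_j$ with vector coefficients, so the plan is to polarize it explicitly and read off the two coefficient matrices.

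\emph{Polarization.} For a single monomial,
\[
(x_i+y_i)^\sigma(x_j+y_j)+x_i^\sigma x_j+y_i^\sigma y_j = x_i^\sigma y_j + y_i^\sigma x_j .
\]
Summing against the coefficients gives
\[
D_XF(Y)=\sum_{i,j}c_{ij}\,y_i^\sigma x_j+\sum_{i,j}c_{ij}\,x_i^\sigma y_j .
\]
The first sum is $K$-linear in $Y^\sigma$ and the second is $K$-linear in $Y$. By the two-term decomposition lemma, these pieces are exactly $A_XY^\sigma$ and $B_XY$ in \eqref{eq:pure-split}. Concretely, the $i$th column of $A_X$ is $\sum_j c_{ij}x_j$, and the $j$th column of $B_X$ is $\sum_i c_{ij}x_i^\sigma$.

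\emph{Evaluation at $Y=X$.} Substituting gives
\[
A_XX^\sigma=\sum_{i,j}c_{ij}\,x_i^\sigma x_j=F(X)
\quad\text{and}\quad
B_XX=\sum_{i,j}c_{ij}\,x_i^\sigma x_j=F(X).
\]
This proves \eqref{eq:euler-split}.

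As a consistency check, the polar derivative gives $D_XF(X)=F(0)+F(X)+F(X)=0$. Summing the two equal terms $F(X)+F(X)$ in characteristic two also gives $0$, so the computation agrees.

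The only delicate point is invoking uniqueness, so that the explicitly computed $A_X,B_X$ are the ones referred to in \eqref{eq:pure-split}. That is exactly what the decomposition lemma provides, so I expect no real obstacle.
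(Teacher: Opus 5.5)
Your proof is correct and takes essentially the same approach as the paper: polarize each monomial $c\,x_i^\sigma x_j$ into $c\,x_i^\sigma y_j + c\,y_i^\sigma x_j$, assign the two pieces to $B_XY$ and $A_XY^\sigma$, and substitute $Y=X$. Your explicit appeal to the two-term decomposition lemma to identify these pieces with the $A_X,B_X$ of \eqref{eq:pure-split} is the same uniqueness fact the paper relies on when it introduces that decomposition.
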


\begin{IEEEproof}
It is enough to polarize one monomial.  The polar of
$c x_i^\sigma x_j$ is
\[
 cX_i^\sigma Y_j+cY_i^\sigma X_j.
\]
The first term belongs to $B_XY$ and the second to $A_XY^\sigma$.
Substituting $Y=X$ in either term gives $cX_i^\sigma X_j$.  Summing over
all coordinate monomials proves both equalities in
\eqref{eq:euler-split}.
\end{IEEEproof}

For $V\ne0$, the component $q_V(Y)=\Tr(V^TF(Y))$ is a binary quadratic
form with polar form
\begin{equation}
 B_V(X,Y)=\Tr\bigl(V^TD_XF(Y)\bigr).
 \label{eq:component-polar}
\end{equation}
Its radical is denoted by $\Rad(B_V)$.  If $F$ is quadratic APN, then
\eqref{eq:apn-nullity}, \eqref{eq:adjoint-image}, and binary rank--nullity
give a unique nonzero vector $\pi_F(X)$ such that
\begin{equation}
 \ker D_XF^*=\mathbb F_2\pi_F(X),\qquad
 \operatorname{im}D_XF=
 \{Z\in K^3:\Tr(\pi_F(X)^TZ)=0\}.
 \label{eq:orthoderivative}
\end{equation}
This is the orthoderivative convention of
\cite{CanteautCouvreurPerrin2022,CouvreurCanteautPerrin2024}.

Finally, our Walsh-transform convention is
\begin{equation}
 W_F(U,V)=\sum_{Y\in K^3}
 (-1)^{\Tr(U^TY+V^TF(Y))}.
 \label{eq:walsh-definition}
\end{equation}

\section{A Normal Form for Rank-Two Frobenius-Linearized Operators}
\label{sec:frames}

We now prove the operator classification independently of any APN family.
The map $AY^\sigma+BY$ is binary-linear: its first summand is
$\sigma$-semilinear over $K$, whereas the full sum is generally not a
semilinear transformation over $K$.  Its hypotheses use both underlying
fields: the two coefficient matrices have $K$-rank two, while the full
binary-linear operator has binary nullity one.  The input and output frames
constructed below are the mechanism that produces the normal form.

\begin{theorem}[Normal-form theorem]
\label{thm:normal-form}
Fix a nontrivial Frobenius automorphism $\sigma$ of $K$ with fixed field
$\mathbb F_2$.  Let
\[
 L:K^3\longrightarrow K^3,\qquad L(Y)=AY^\sigma+BY,
\]
and assume
\begin{equation}
 \operatorname{rank}_K A=\operatorname{rank}_K B=2,
 \qquad \ker_{\mathbb F_2}L=\mathbb F_2X
 \label{eq:frame-hypotheses}
\end{equation}
for some $X\ne0$.  Choose nonzero $R,S\in K^3$ satisfying
\[
 BR=0,\qquad AS^\sigma=0,
\]
and put
\begin{equation}
 T=AX^\sigma=BX,\qquad P=AR^\sigma,\qquad Q=BS.
 \label{eq:frame-vectors}
\end{equation}
Then
\[
 E=[X\ R\ S],\qquad J=[T\ P\ Q]
\]
belong to $\operatorname{GL}(3,K)$, and for every
$\alpha,\beta,\gamma\in K$,
\begin{equation}
 L\bigl(E(\alpha,\beta,\gamma)^T\bigr)
 =J(\alpha^\sigma+\alpha,\beta^\sigma,\gamma)^T.
 \label{eq:frame-normal-form}
\end{equation}
Equivalently, for
\begin{equation}
 N_\sigma(\alpha,\beta,\gamma)^T
 =(\alpha^\sigma+\alpha,\beta^\sigma,\gamma)^T,
 \label{eq:N-sigma}
\end{equation}
one has
\begin{equation}
 \boxed{\ J^{-1}\circ L\circ E=N_\sigma\ }.
 \label{eq:left-right-normal-form}
\end{equation}
Moreover,
\begin{equation}
 \operatorname{im}L=
 \{tT+pP+rQ:\Tr(t)=0,\ p,r\in K\}.
 \label{eq:frame-image}
\end{equation}
\end{theorem}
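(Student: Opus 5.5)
The identity \eqref{eq:frame-normal-form} is a direct expansion, and the real work is showing $E,J\in\operatorname{GL}(3,K)$. I would argue both invertibility claims by counting binary dimensions, using the hypothesis $\dim_{\mathbb F_2}\ker L=1$ together with $m>1$.

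First, the notation in \eqref{eq:frame-vectors} is consistent. $L(X)=0$ in characteristic two gives $AX^\sigma=BX$. Since $\operatorname{rank}A=\operatorname{rank}B=2$, we have $\ker_K B=KR$. Also $AY^\sigma=0$ iff $Y^\sigma\in KS^\sigma$ iff $Y\in KS$, because $\sigma$ acts bijectively on $K$.

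Next, the identity itself. For $Y=\alpha X+\beta R+\gamma S$, semilinearity of $Y\mapsto Y^\sigma$ gives
\[
 L(Y)=\alpha^\sigma AX^\sigma+\beta^\sigma AR^\sigma+\gamma^\sigma AS^\sigma+\alpha BX+\beta BR+\gamma BS .
\]
Using $AS^\sigma=0$ and $BR=0$, this equals $(\alpha^\sigma+\alpha)T+\beta^\sigma P+\gamma Q$. This is \eqref{eq:frame-normal-form}, and \eqref{eq:left-right-normal-form} follows once $E$ and $J$ are invertible.

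\emph{Invertibility of $E$.} This is the step needing case analysis; the idea is that any $K$-dependence among $X,R,S$ would force $\ker_{\mathbb F_2}L$ to contain an $\mathbb F_2$-space of dimension $m\ge 2$.
\begin{itemize}
 \item If $R=cS$, then $L(tR)=t^\sigma AR^\sigma+tBR=0$ for all $t\in K$. So $KR\subseteq\ker L$, a contradiction.
 \item If $X\in KR$, then $T=BX=0$, so $L(tX)=t^\sigma T+tT=0$ for all $t$, so $KX\subseteq\ker L$. The case $X\in KS$ is identical, since then $T=AX^\sigma=0$.
 \item Otherwise $R,S$ are independent and $X=aR+bS$ with $a,b\neq0$. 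Then $0=L(X)=a^\sigma P+bQ$, so $Q=cP$ with $c=a^\sigma/b$. Hence $L(uR+vS)=(u^\sigma+cv)P$, and every pair with $v=u^\sigma/c$ lies in the kernel. This gives an $m$-dimensional binary kernel, again a contradiction.
\end{itemize}

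\emph{Invertibility of $J$.} By binary rank--nullity, $\dim_{\mathbb F_2}\operatorname{im}L=3m-1$. By the identity already proved and the surjectivity of $E$, every element of $\operatorname{im}L$ lies in $KT+KP+KQ$. If $T,P,Q$ were $K$-dependent, this span would have binary dimension at most $2m$. That is less than $3m-1$ because $m\ge2$, a contradiction.

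\emph{The image description.} Since $E$ is bijective, $\operatorname{im}L=\{(\alpha^\sigma+\alpha)T+\beta^\sigma P+\gamma Q\}$. As $\beta$ ranges over $K$, so does $\beta^\sigma$. By Lemma~\ref{lem:trace-image}, $\alpha^\sigma+\alpha$ ranges exactly over $\ker\Tr$. Since $T,P,Q$ form a $K$-basis, this gives \eqref{eq:frame-image}.
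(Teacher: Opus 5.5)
Your proof is correct and follows the paper's argument essentially step for step. It uses the same direct expansion for \eqref{eq:frame-normal-form}, the same case analysis in which a $K$-dependence among $X,R,S$ would force a full $K$-line (or an $m$-dimensional binary space) into $\ker L$, and the same appeal to Lemma~\ref{lem:trace-image} for \eqref{eq:frame-image}. The only deviation is in the invertibility of $J$: your single count $\dim_{\mathbb F_2}\operatorname{im}L=3m-1>2m$ rules out every $K$-dependence among $T,P,Q$ at once, which makes the paper's separate kernel argument for the independence of $P,Q$ unnecessary.
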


\begin{IEEEproof}
The equality defining $T$ follows from $L(X)=0$ in characteristic two.
We first prove that $E$ is nonsingular.  If $R$ and $S$ were proportional,
then both $BR=0$ and $AR^\sigma=0$ after rescaling.  Hence the full
$K$-line $KR$ would lie in $\ker L$, contradicting
$|\ker L|=2<q$.

Suppose next that $X=uR+vS$.  Then
\begin{equation}
 u^\sigma P+vQ=0.
 \label{eq:x-span-relation}
\end{equation}
If $u=0$ or $v=0$, this equality makes $KS$ or $KR$, respectively, a
subspace of $\ker L$.  If $u,v\ne0$, then for every $t\in K$,
\[
 tR+\frac{t^\sigma v}{u^\sigma}S\in\ker L
\]
by \eqref{eq:x-span-relation}.  These are $q$ distinct kernel vectors,
again a contradiction.  Thus $X,R,S$ are $K$-independent.

Direct substitution now gives \eqref{eq:frame-normal-form}; at this point
we have not assumed that $J$ is nonsingular.  If
$rP+sQ=0$ for some $(r,s)\ne(0,0)$, then
\[
 (r^\tau)R+sS\in\ker L.
\]
This is a nonzero vector in $\langle R,S\rangle_K$, whereas the unique
nonzero kernel vector $X$ is outside that plane.  Therefore $P,Q$ are
independent.  If $T\in\langle P,Q\rangle_K$, then
\eqref{eq:frame-normal-form} would place $\operatorname{im}L$ inside a
$K$-plane, of binary dimension $2m$.  On the other hand,
\eqref{eq:frame-hypotheses} gives
$\dim_{\mathbb F_2}\operatorname{im}L=3m-1>2m$.  Thus $T,P,Q$ are a
$K$-basis, and $J$ is nonsingular.

Equation \eqref{eq:left-right-normal-form} is simply
\eqref{eq:frame-normal-form} written as an operator identity.  Finally,
bijectivity of $\beta\mapsto\beta^\sigma$ and
Lemma~\ref{lem:trace-image} give exactly \eqref{eq:frame-image}.
\end{IEEEproof}

The preceding statement yields a classification, not only a coordinate
construction.  We call $L_1$ and $L_2$ left-right $K$-linearly equivalent if
$L_2=J\circ L_1\circ E^{-1}$ for some
$E,J\in\operatorname{GL}(3,K)$.

\begin{corollary}[Left-right classification]
\label{cor:left-right-classification}
For the fixed nontrivial automorphism $\sigma$ above, let
$L(Y)=AY^\sigma+BY$ be a binary-linear operator on $K^3$.  Then
\begin{equation}
 \operatorname{rank}_K A=\operatorname{rank}_K B=2,
 \qquad \dim_{\mathbb F_2}\ker L=1
 \label{eq:classification-class}
\end{equation}
if and only if there exist $E,J\in\operatorname{GL}(3,K)$ such that
\begin{equation}
 L=J\circ N_\sigma\circ E^{-1}.
 \label{eq:classification-equivalence}
\end{equation}
Thus all such operators for this fixed $\sigma$ belong to one left-right
$K$-linear equivalence class, represented by $N_\sigma$.
\end{corollary}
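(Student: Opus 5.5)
The forward direction is Theorem~\ref{thm:normal-form} rewritten. The converse needs a short computation: write $L$ in the form $L=J\circ N_\sigma\circ E^{-1}$, read off its two coefficient matrices, and check their ranks and the binary kernel.

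\emph{Forward direction.} Assume \eqref{eq:classification-class}. Then $\ker_{\mathbb F_2}L=\mathbb F_2X$ for some $X\ne0$. Since $\operatorname{rank}_KB=2$, we can pick a nonzero $R$ with $BR=0$. Since $\operatorname{rank}_KA=2$, we can pick a nonzero $W$ with $AW=0$ and set $S=W^\tau$, so that $AS^\sigma=0$. Theorem~\ref{thm:normal-form} then supplies $E,J\in\operatorname{GL}(3,K)$ with $J^{-1}\circ L\circ E=N_\sigma$, which is exactly \eqref{eq:classification-equivalence}.

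\emph{Converse.} Suppose $L=J\circ N_\sigma\circ E^{-1}$, and put $M=E^{-1}$. Because $M$ has entries in $K$, applying $\sigma$ coordinatewise gives $(MY)^\sigma=M^\sigma Y^\sigma$, where $M^\sigma$ is $M$ with $\sigma$ applied entrywise. Write $D_1=\operatorname{diag}(1,1,0)$ and $D_2=\operatorname{diag}(1,0,1)$. Then
\[
 N_\sigma(Z)=D_1Z^\sigma+D_2Z,
 \qquad
 L(Y)=JD_1M^\sigma\,Y^\sigma+JD_2M\,Y .
\]
By the two-term decomposition lemma, the coefficient matrices of $L$ are intrinsic, so $A=JD_1M^\sigma$ and $B=JD_2M$. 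The matrices $J$ and $M$ are invertible, and $M^\sigma$ is invertible too, since $\det(M^\sigma)=(\det M)^\sigma\ne0$. Hence $\operatorname{rank}_KA=\operatorname{rank}_KD_1=2$ and $\operatorname{rank}_KB=\operatorname{rank}_KD_2=2$. For the kernel, $J$ and $E$ are bijections, so $\ker L=E(\ker N_\sigma)$. Now $N_\sigma(\alpha,\beta,\gamma)=0$ forces $\alpha^\sigma+\alpha=0$, $\beta^\sigma=0$ and $\gamma=0$. By Lemma~\ref{lem:trace-image} this means $\alpha\in\mathbb F_2$ and $\beta=\gamma=0$, so $\ker N_\sigma=\mathbb F_2e_1$ and $\dim_{\mathbb F_2}\ker L=1$.

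The only delicate step is identifying $A$ and $B$ as the coefficient matrices of the composite operator, and this rests on two facts. First, $\sigma$ commutes with multiplication by $M$ once $M$ is replaced by $M^\sigma$. Second, the decomposition into a $Y^\sigma$-part and a $Y$-part is unique, which is what the two-term decomposition lemma guarantees. The final sentence of the corollary is then immediate: every operator in the class is equivalent to $N_\sigma$, and equivalence is transitive, so they all lie in one left-right $K$-linear class.
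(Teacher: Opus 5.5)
Your proposal is correct and follows the paper's proof essentially step for step. The forward direction is Theorem~\ref{thm:normal-form}. The converse reads off $A=JD_1(E^{-1})^\sigma$ and $B=JD_2E^{-1}$, uses the two-term decomposition lemma to identify these as the coefficient matrices of $L$, and uses Lemma~\ref{lem:trace-image} to get $\ker L=E(\mathbb F_2e_1)$; your extra details (taking $S=W^\tau$ for nonzero $W\in\ker A$, and $\det(M^\sigma)=(\det M)^\sigma\neq0$) just make explicit steps the paper leaves implicit.
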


\begin{IEEEproof}
The forward implication is Theorem~\ref{thm:normal-form}.  Conversely,
write
\[
 N_\sigma(Z)=A_0Z^\sigma+B_0Z,
 \qquad
 A_0=\operatorname{diag}(1,1,0),\quad
 B_0=\operatorname{diag}(1,0,1).
\]
Both matrices have $K$-rank two and
$\ker_{\mathbb F_2}N_\sigma=\mathbb F_2e_1$ by
Lemma~\ref{lem:trace-image}.  If \eqref{eq:classification-equivalence}
holds, then
\[
 A=JA_0(E^{-1})^\sigma,\qquad B=JB_0E^{-1},
\]
so the displayed decomposition has both coefficient ranks equal to two,
while $\ker_{\mathbb F_2}L=E(\mathbb F_2e_1)$ has binary dimension one.
The two-term decomposition lemma in
Section~\ref{sec:preliminaries} shows that these are the coefficient matrices
of $L$.
\end{IEEEproof}

The first dual output coordinate has a stronger interpretation than a
projective label: it identifies the binary trace hyperplane exactly and, at
the same time, defines a $K$-linear covector.

\begin{corollary}[Exact trace-adjoint coordinate]
\label{cor:exact-coordinate}
Under the hypotheses of Theorem~\ref{thm:normal-form}, let
$n^T=e_1^TJ^{-1}$.  Then
\begin{equation}
 \operatorname{im}L=
 \{Z\in K^3:\Tr(n^TZ)=0\},
 \qquad \ker L^*=\mathbb F_2n,
 \label{eq:exact-adjoint}
\end{equation}
and
\begin{equation}
 n^TT=1,\qquad n^TP=n^TQ=0.
 \label{eq:exact-normalization}
\end{equation}
Thus $n$ is the unique nonzero binary trace normal, with the exact
$K$-valued normalization $n^TT=1$; independently, its $K^\times$-class
represents the $K$-linear covector $Z\mapsto n^TZ$.
\end{corollary}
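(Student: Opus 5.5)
The normalization \eqref{eq:exact-normalization} is immediate. Since $n^T=e_1^TJ^{-1}$ and $J=[T\ P\ Q]$, we have $n^TJ=e_1^TJ^{-1}J=e_1^T$. Reading this row entry by entry gives $n^TT=1$, $n^TP=0$ and $n^TQ=0$. In particular $n\neq0$.

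Next I identify the image. By \eqref{eq:frame-image}, every $Z\in\operatorname{im}L$ has the form $Z=tT+pP+rQ$ with $\Tr(t)=0$, so $n^TZ=t$ and $\Tr(n^TZ)=0$. Conversely, suppose $\Tr(n^TZ)=0$. Because $J$ is invertible (Theorem~\ref{thm:normal-form}), we can expand $Z=tT+pP+rQ$ in the basis $T,P,Q$. Applying $n^T$ gives $t=n^TZ$, so $\Tr(t)=0$, and \eqref{eq:frame-image} places $Z$ in $\operatorname{im}L$. This proves the first equality in \eqref{eq:exact-adjoint}. As a consistency check, the set on the right is the kernel of the nonzero functional $Z\mapsto\Tr(n^TZ)$. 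It is nonzero because $n\ne0$ and the trace pairing is nondegenerate. Its binary dimension is therefore $3m-1$, which agrees with $\dim_{\mathbb F_2}\operatorname{im}L$.

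For the adjoint kernel I use \eqref{eq:adjoint-image}: $\ker L^*=(\operatorname{im}L)^{\perp_{\Tr}}$. Since $\operatorname{im}L$ is a binary hyperplane, nondegeneracy of $\langle\cdot,\cdot\rangle_{\Tr}$ makes its orthogonal complement one-dimensional over $\mathbb F_2$. The vector $n$ is orthogonal to $\operatorname{im}L$ by the equality just proved, and $n\ne0$. Hence $\ker L^*=\mathbb F_2n$, and $n$ is the unique nonzero trace normal to $\operatorname{im}L$.

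The closing interpretive sentence needs nothing further: $Z\mapsto n^TZ$ is $K$-linear, and its $K^\times$-class is a point of the dual plane. I expect no real obstacle here. The only point needing care is the converse inclusion of the image, which depends on $T,P,Q$ being a $K$-basis. That fact was established in Theorem~\ref{thm:normal-form}, so it can simply be cited.
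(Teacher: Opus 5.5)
Your proof is correct and follows essentially the same route as the paper: the normalization comes from the first row of $J^{-1}J=I$, and \eqref{eq:frame-image} together with nondegeneracy of the trace pairing pins down $\ker L^*=\mathbb F_2 n$. The only difference is the order of steps: you prove the image equality directly by expanding in the basis $T,P,Q$ and then deduce $\ker L^*$ as the orthogonal complement of a binary hyperplane, whereas the paper first obtains $\dim_{\mathbb F_2}\ker L^*=1$ from equality of the binary ranks of $L$ and $L^*$ and then reads off the image description.
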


\begin{IEEEproof}
The three identities in \eqref{eq:exact-normalization} are the first-row
identities in $J^{-1}J=I$.  By \eqref{eq:frame-image},
$\Tr(n^TL(Y))=0$ for all $Y$.  Hence $n\in\ker L^*$.  The binary ranks of
$L$ and $L^*$ are equal, so \eqref{eq:frame-hypotheses} gives
$\dim_{\mathbb F_2}\ker L^*=1$.  Since $n\ne0$, it is the unique nonzero
element.  Equation \eqref{eq:exact-adjoint} follows.
\end{IEEEproof}

There is also a useful complete dual-frame statement.  Let $\ell_A$ and
$\ell_B$ be nonzero generators of the left kernels of $A$ and $B$,
respectively, and define
\[
 p_0=\ell_B^TP,\qquad q_0=\ell_A^TQ.
\]

\begin{corollary}[Complete dual frame]
\label{cor:complete-dual}
The scalars $p_0,q_0$ are nonzero and
\begin{equation}
 J^{-1}=
 \begin{pmatrix}
 n^T\\
 p_0^{-1}\ell_B^T\\
 q_0^{-1}\ell_A^T
 \end{pmatrix}.
 \label{eq:generic-dual-frame}
\end{equation}
\end{corollary}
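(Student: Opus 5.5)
The plan is to verify directly that the proposed matrix is a left inverse of $J=[T\ P\ Q]$. Since $J$ is already known to be nonsingular by Theorem~\ref{thm:normal-form}, it suffices to compute each proposed row against the three columns $T,P,Q$ and obtain the rows of the identity. The first row is handled entirely by Corollary~\ref{cor:exact-coordinate}, which gives $n^TT=1$ and $n^TP=n^TQ=0$. The remaining work concerns the rows built from $\ell_B$ and $\ell_A$.

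For the second row, I use $\ell_B^TB=0$. Because $T=BX$ and $Q=BS$, this gives $\ell_B^TT=\ell_B^TQ=0$ at once. Hence $\ell_B^T$ annihilates the two basis vectors $T,Q$ of $K^3$. It is a nonzero covector, so it cannot also annihilate $P$; otherwise it would vanish on the whole $K$-basis $T,P,Q$. Therefore $p_0=\ell_B^TP\ne0$, and $p_0^{-1}\ell_B^T$ pairs to $(0,1,0)$ against $(T,P,Q)$.

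The third row is symmetric, using $\ell_A^TA=0$. Since $T=AX^\sigma$ and $P=AR^\sigma$, we get $\ell_A^TT=\ell_A^TP=0$. By the same basis argument, $q_0=\ell_A^TQ\ne0$, and $p_0^{-1},q_0^{-1}$ are then well defined.

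Stacking the three rows gives a matrix $M$ with $MJ=I$. Since $J\in\operatorname{GL}(3,K)$, this forces $M=J^{-1}$, which is \eqref{eq:generic-dual-frame}. The proof should note that the left kernels of $A$ and $B$ are one-dimensional because both matrices have rank two, so $\ell_A,\ell_B$ exist and are unique up to $K^\times$. The resulting rows are independent of that scaling, because $p_0$ and $q_0$ rescale by the same factor.

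The only step that needs any care is the nonvanishing of $p_0$ and $q_0$. Here the essential input is that $J$ is nonsingular, which is a conclusion of Theorem~\ref{thm:normal-form}. Beyond that, the argument is bookkeeping with the Euler-type identities $T=AX^\sigma=BX$.
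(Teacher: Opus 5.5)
Your proposal is correct and follows the paper's proof essentially step for step. In both, the left-kernel relations give $\ell_B^TT=\ell_B^TQ=0$ and $\ell_A^TT=\ell_A^TP=0$, nonsingularity of $J$ forces $p_0,q_0\ne0$, and Corollary~\ref{cor:exact-coordinate} supplies the first row, so the stacked rows multiply $J$ to the identity.
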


\begin{IEEEproof}
Because $\ell_B^TB=0$, the row $\ell_B^T$ annihilates $T=BX$ and
$Q=BS$; nonsingularity of $J$ forces $p_0=\ell_B^TP\ne0$.  Similarly,
$\ell_A^T$ annihilates $T=AX^\sigma$ and $P=AR^\sigma$, and therefore
$q_0\ne0$.  The three displayed rows multiply $J$ to the identity.
\end{IEEEproof}

The classification is stronger than the generic field-reduction identity
for a trace hyperplane.  The two coefficient-kernel directions construct the
input frame, while the binary-nullity condition forces the output frame and
hence the canonical model.  Family-specific identities relating
$\det E$ and $\det J$ require additional coefficient algebra and do not
follow from Theorem~\ref{thm:normal-form}.

\section{Pure $\sigma$-Quadratic APN Consequences}
\label{sec:apn}

We specialize the operator theorem to the derivative of a pure
$\sigma$-quadratic APN map.  The specialization first yields an exact
field-valued normalization.  Oddness of the total binary dimension is used
only later.

\begin{theorem}[Exact orthoderivative normalization]
\label{thm:apn-normalization}
Let $F:K^3\to K^3$ be pure $\sigma$-quadratic and APN, and suppose
\begin{equation}
 \operatorname{rank}_K A_X=\operatorname{rank}_K B_X=2
 \quad\text{for every }X\ne0
 \label{eq:rank-two-class}
\end{equation}
in the decomposition \eqref{eq:pure-split}.  Then
\begin{equation}
 \boxed{\ \pi_F(X)^TF(X)=1\ \text{in }K\ }
 \label{eq:pi-F-normalization}
\end{equation}
for every $X\ne0$.
\end{theorem}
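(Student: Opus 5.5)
Fix $X\ne0$ and apply Theorem~\ref{thm:normal-form} to the operator $L=D_XF$, written as $L(Y)=A_XY^\sigma+B_XY$ via the decomposition \eqref{eq:pure-split}. The hypotheses of that theorem hold. The rank condition is exactly \eqref{eq:rank-two-class}. The binary-kernel condition $\ker_{\mathbb F_2}L=\mathbb F_2X$ is the APN characterization \eqref{eq:apn-nullity} for quadratic maps. The kernel generator in the theorem is therefore the derivative direction $X$ itself. This identification is what makes the pure case special.

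Next, choose nonzero $R\in\ker_K B_X$ and $S$ with $A_XS^\sigma=0$, and form $T,P,Q$, $E$ and $J$ as in \eqref{eq:frame-vectors}. By the Euler splitting, Lemma~\ref{lem:euler-split}, we have $T=A_XX^\sigma=B_XX=F(X)$. So the first output-frame vector is precisely $F(X)$.

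Now let $n^T=e_1^TJ^{-1}$. Corollary~\ref{cor:exact-coordinate} gives $\ker L^*=\mathbb F_2 n$ with $n\ne0$, together with $n^TT=1$. On the other hand, the orthoderivative is defined in \eqref{eq:orthoderivative} as the unique nonzero element of $\ker D_XF^*$. Since $\ker D_XF^*$ is a one-dimensional $\mathbb F_2$-space, its unique nonzero element is $n$, so $\pi_F(X)=n$. Combining this with $T=F(X)$ yields $\pi_F(X)^TF(X)=n^TT=1$ in $K$.

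The only step that needs care is that $n$ does not depend on the auxiliary choices of $R$ and $S$, which rescaling over $K$ could in principle change. This is settled by uniqueness. Whatever choices are made, $n$ is the unique nonzero vector of the $\mathbb F_2$-line $\ker L^*$, and that line is intrinsic to $L$. Moreover, the normalization $n^TT=1$ involves only $T=F(X)$, which is also choice-free. So the identity holds for every $X\ne0$.
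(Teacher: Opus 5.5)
Your proposal is correct and follows the paper's proof essentially step for step. You apply Theorem~\ref{thm:normal-form} to $D_XF$, using APN for the binary kernel, identify $T_X=F(X)$ via Lemma~\ref{lem:euler-split}, invoke Corollary~\ref{cor:exact-coordinate} with the uniqueness in \eqref{eq:orthoderivative} to get $\pi_F(X)=n$ and $n^TT=1$, and note, as the paper does, that the result is independent of the choices of $R$ and $S$.
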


\begin{IEEEproof}
Fix $X\ne0$.  APN gives
$\ker_{\mathbb F_2}D_XF=\mathbb F_2X$.  Apply
Theorem~\ref{thm:normal-form} to $D_XF$.  By Lemma~\ref{lem:euler-split}, the
first output-frame vector is
\[
 T_X=A_XX^\sigma=B_XX=F(X).
\]
Corollary~\ref{cor:exact-coordinate} says that the unique nonzero
trace-adjoint normal is the first row of $J_X^{-1}$ and pairs to one with
$T_X$.  By definition \eqref{eq:orthoderivative}, that normal is
$\pi_F(X)$, proving \eqref{eq:pi-F-normalization}.  In particular, the
answer is independent of the choices of the two coefficient-kernel
generators.
\end{IEEEproof}

We recall the following standard odd-dimensional property
\cite[Sec.~4.2]{CarletCharpinZinoviev1998} and
\cite{CanteautCouvreurPerrin2022} and include the
short counting proof for completeness.  It supplies the injectivity needed for
projectivization without assuming a correlation or polarity.

\begin{lemma}[Standard odd-dimensional orthoderivative bijection]
\label{lem:pi-bijection}
Let $G:\mathbb F_2^N\to\mathbb F_2^N$ be quadratic APN with $N$ odd.
Then every nonzero component polar form has a one-dimensional radical, and
the orthoderivative is a bijection on
$\mathbb F_2^N\setminus\{0\}$.
\end{lemma}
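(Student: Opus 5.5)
We argue entirely over $\mathbb F_2$, with the standard dot product on $\mathbb F_2^N$ in place of the trace pairing, and we follow the classical counting argument. For $V\ne0$ let $B_V(X,Y)=V\cdot D_XG(Y)$ be the polar form of the component $q_V(Y)=V\cdot G(Y)$. By definition, $X\in\Rad(B_V)$ means $V\perp\operatorname{im}D_XG$. For $X\ne0$, the image of $D_XG$ is a hyperplane by \eqref{eq:apn-nullity} and rank--nullity, with normal $\pi_G(X)$. So for $X\ne0$ we have $X\in\Rad(B_V)$ if and only if $V=\pi_G(X)$. Hence $\Rad(B_V)\setminus\{0\}=\pi_G^{-1}(V)$, and both claims reduce to showing that every fibre of $\pi_G$ has exactly one element.

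\emph{Step 1: every radical is nonzero.} $B_V$ is an alternating $\mathbb F_2$-bilinear form, since $B_V(X,X)=V\cdot D_XG(X)=V\cdot(G(0)+G(X)+G(X))=0$ when $G(0)=0$. An alternating form has even rank. Because $N$ is odd, $\dim\Rad(B_V)=N-\operatorname{rank}B_V$ is odd, hence at least $1$. So every $V\ne0$ has $|\pi_G^{-1}(V)|=2^{\dim\Rad(B_V)}-1\ge1$, and $\pi_G$ is surjective onto $\mathbb F_2^N\setminus\{0\}$.

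\emph{Step 2: counting.} The map $\pi_G$ is defined on $2^N-1$ nonzero points and takes $2^N-1$ nonzero values. A surjection between finite sets of equal size is a bijection. Therefore every fibre is a singleton, each radical has exactly one nonzero vector, and so $\dim\Rad(B_V)=1$.

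The only delicate point is the first one: correctly identifying $\Rad(B_V)\setminus\{0\}$ with the fibre $\pi_G^{-1}(V)$ through the adjoint relation \eqref{eq:adjoint-image}, and checking that $B_V$ is alternating (which uses $G(0)=0$). The parity argument and the counting are then routine. If $G(0)\ne0$, one replaces $G$ by $G+G(0)$, which changes neither the derivatives nor the radicals.
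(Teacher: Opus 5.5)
Your proof is correct and is essentially the paper's argument: the paper double-counts the incidence pairs $(V,X)$ with $X\in\Rad(B_V)\setminus\{0\}$. That count is exactly $2^N-1$ when taken over $X$, and at least one per $V\ne0$ because alternating forms in odd dimension are singular; this is the same as your identification of radicals with fibres of $\pi_G$ followed by your equal-cardinality surjection step. Your closing remark about $G(0)\ne0$ is unnecessary, and slightly off, since adding a constant $c$ shifts the polar derivative by $c$; in any case, under the paper's definition linearity of $D_XG$ already forces $D_XG(0)=G(0)=0$.
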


\begin{IEEEproof}
For each $X\ne0$, APN gives a unique nonzero normal $V=\pi_G(X)$ to
$\operatorname{im}D_XG$.  By \eqref{eq:component-polar}, this is
equivalent to $X\in\Rad(B_V)$.  Hence the number of incidence pairs
\[
 (V,X),\qquad V\ne0,\quad X\in\Rad(B_V)\setminus\{0\},
\]
is exactly $2^N-1$ when counted by $X$.  Every alternating bilinear form
on an odd-dimensional binary space is singular, so each of the $2^N-1$
nonzero components contributes at least one such $X$.  Equality in the
count forces every component radical to have exactly one nonzero vector.
Thus each radical is an $\mathbb F_2$-line, and $X\mapsto\pi_G(X)$ is a
bijection on the nonzero vectors.
\end{IEEEproof}

\begin{theorem}[Odd-degree permutation]
\label{thm:permutation}
Under the hypotheses of Theorem~\ref{thm:apn-normalization}, if $m$ is
odd, then $F$ is a permutation of $K^3$.
\end{theorem}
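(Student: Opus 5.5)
The plan is to show that $F(X)\notin\operatorname{im}D_XF$ for every $X\ne0$, and then derive injectivity of $F$ from this. The first step uses Theorem~\ref{thm:apn-normalization}. It gives $\pi_F(X)^TF(X)=1$ in $K$. Since $m$ is odd, $\Tr_{K/\mathbb F_2}(1)=m\bmod 2=1$, so $\Tr(\pi_F(X)^TF(X))=1$. By \eqref{eq:orthoderivative}, the binary image of $D_XF$ is exactly the trace hyperplane $\{Z:\Tr(\pi_F(X)^TZ)=0\}$. Hence $F(X)\notin\operatorname{im}D_XF$ for every nonzero $X$.

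Next I convert this into injectivity. Suppose $F(Y)=F(Y+X)$ with $X\ne0$. Expanding the polar derivative gives
\[
F(X+Y)=F(X)+F(Y)+D_XF(Y),
\]
so $F(X)=D_XF(Y)\in\operatorname{im}D_XF$. This contradicts the first step. Therefore $F$ is injective on the finite set $K^3$, and hence a permutation.

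The only delicate point is the convention for the polar derivative. In the paper, $D_XF(Y)=F(X+Y)+F(X)+F(Y)$, so the identity above is exactly the definition and no extra constant appears. I also need to confirm that Theorem~\ref{thm:apn-normalization} applies in every nonzero direction. This holds because the rank hypothesis \eqref{eq:rank-two-class} is assumed for all $X\ne0$, and APN gives binary nullity one via \eqref{eq:apn-nullity}.

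I do not expect a real obstacle. The key input is that $\Tr(1)=1$ exactly when $m$ is odd. This is where oddness of $m$ enters, and for even $m$ the argument gives no conclusion.
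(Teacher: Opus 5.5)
Your proposal is correct and follows essentially the same route as the paper: the normalization $\pi_F(X)^TF(X)=1$, together with $\Tr(1)=1$ for odd $m$, places $F(X)$ outside the trace hyperplane $\operatorname{im}D_XF$, and a collision $F(Y+X)=F(Y)$ would force $D_XF(Y)=F(X)$. Your observation that this last identity is immediate from the paper's polar-derivative convention is accurate.
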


\begin{IEEEproof}
For $X\ne0$, \eqref{eq:pi-F-normalization} and oddness of $m$ give
\[
 \Tr\bigl(\pi_F(X)^TF(X)\bigr)=\Tr(1)=1.
\]
Equation \eqref{eq:orthoderivative} therefore shows that
$F(X)\notin\operatorname{im}D_XF$.  If $F(Y+X)=F(Y)$, then quadraticity
and $F(0)=0$ give $D_XF(Y)=F(X)$, a contradiction.  Thus no two inputs
differing by a nonzero $X$ have the same image.  The finite map $F$ is
injective and hence bijective.
\end{IEEEproof}

\begin{theorem}[Orthoderivative dual-coordinate bijection]
\label{thm:dual-coordinate-bijection}
Under the hypotheses of Theorem~\ref{thm:permutation},
\begin{equation}
 \pi_F(\lambda X)=\lambda^{-(\sigma+1)}\pi_F(X)
 \quad(\lambda\in K^*,\ X\ne0),
 \label{eq:pi-homogeneity}
\end{equation}
and
\begin{equation}
 \Phi_F:\operatorname{PG}(2,K)\longrightarrow
 \operatorname{PG}(2,K)^*,\qquad [X]\longmapsto[\pi_F(X)]
 \label{eq:projective-map}
\end{equation}
is a bijection.
\end{theorem}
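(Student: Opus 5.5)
We prove \eqref{eq:pi-homogeneity} first and then obtain the bijection from it. The key point is that a pure $\sigma$-quadratic map is homogeneous of degree $\sigma+1$. Every monomial $x_i^\sigma x_j$ satisfies $(\lambda x_i)^\sigma(\lambda x_j)=\lambda^{\sigma+1}x_i^\sigma x_j$, so $F(\lambda X)=\lambda^{\sigma+1}F(X)$. Polarizing the same monomials gives
\[
 D_{\lambda X}F(Y)=\lambda^{\sigma}B_X Y+\lambda A_X Y^\sigma .
\]
Substituting $Y=\lambda Y'$ yields $D_{\lambda X}F(\lambda Y')=\lambda^{\sigma+1}D_XF(Y')$. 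Since $Y\mapsto\lambda Y$ is bijective, this shows
\[
 \operatorname{im}D_{\lambda X}F=\lambda^{\sigma+1}\operatorname{im}D_XF .
\]

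The derivative image is the trace hyperplane from \eqref{eq:orthoderivative}. Hence $Z\in\operatorname{im}D_{\lambda X}F$ if and only if $\Tr(\pi_F(X)^T\lambda^{-(\sigma+1)}Z)=0$. So the vector $\lambda^{-(\sigma+1)}\pi_F(X)$ is a nonzero trace normal to $\operatorname{im}D_{\lambda X}F$. Uniqueness of the nonzero normal (binary nullity one of the adjoint) gives \eqref{eq:pi-homogeneity}. As a consistency check, this agrees with Theorem~\ref{thm:apn-normalization}: $\pi_F(\lambda X)^TF(\lambda X)=\lambda^{-(\sigma+1)}\lambda^{\sigma+1}=1$. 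One could alternatively derive homogeneity from that normalization, but the uniqueness argument is cleaner.

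By \eqref{eq:pi-homogeneity}, $\Phi_F$ is well defined on projective points: $[\pi_F(\lambda X)]=[\pi_F(X)]$, and $\pi_F(X)\neq0$.

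For bijectivity, we view $K^3$ as $\mathbb F_2^{3m}$ with $N=3m$ odd, since $m$ is odd. The trace pairing identifies the orthoderivative with the binary one up to a fixed linear isomorphism. Lemma~\ref{lem:pi-bijection} then says $\pi_F$ is a bijection of $K^3\setminus\{0\}$. Because $m$ is odd, $\gcd(2^k+1,2^m-1)=1$, so $\lambda\mapsto\lambda^{-(\sigma+1)}$ is a permutation of $K^*$.

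For injectivity of $\Phi_F$, suppose $[\pi_F(X)]=[\pi_F(Y)]$, that is, $\pi_F(Y)=\mu\pi_F(X)$. Choose $\lambda$ with $\lambda^{-(\sigma+1)}=\mu$. Then $\pi_F(Y)=\pi_F(\lambda X)$, and injectivity of $\pi_F$ gives $Y=\lambda X$, so $[Y]=[X]$. Both projective planes have $q^2+q+1$ points, so $\Phi_F$ is a bijection.

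The main obstacle is the gcd fact, which holds only because $m$ is odd. If $d=\gcd(2^k+1,2^m-1)>1$, then since $2^k+1$ divides $2^{2k}-1$, $d$ divides $2^{\gcd(2k,m)}-1=2^{\gcd(2,m)}-1=1$ when $m$ is odd, a contradiction. A secondary care point is making the identification with the binary orthoderivative of Lemma~\ref{lem:pi-bijection} explicit through the nondegenerate trace pairing.
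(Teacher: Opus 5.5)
Your proposal is correct and follows the paper's proof essentially step for step. You use degree-$(\sigma+1)$ homogeneity of $F$ and $D_XF$ together with uniqueness of the trace normal to get \eqref{eq:pi-homogeneity}, then coprimality of $2^k+1$ and $2^m-1$ plus the odd-dimensional bijection of Lemma~\ref{lem:pi-bijection} for injectivity, and point counting for surjectivity. The differences are cosmetic: you make the scaling $A_{\lambda X}=\lambda A_X$, $B_{\lambda X}=\lambda^\sigma B_X$ explicit, and you prove the gcd fact via $\gcd(2^a-1,2^b-1)=2^{\gcd(a,b)}-1$ rather than via the multiplicative order of $2$.
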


\begin{IEEEproof}
Pure $\sigma$-quadratic homogeneity gives
\[
 F(\lambda X)=\lambda^{\sigma+1}F(X),\qquad
 D_{\lambda X}F(\lambda Y)=
 \lambda^{\sigma+1}D_XF(Y).
\]
It follows from the trace pairing that
$\lambda^{-(\sigma+1)}\pi_F(X)$ is the nonzero adjoint normal to
$\operatorname{im}D_{\lambda X}F$.  Uniqueness in
\eqref{eq:orthoderivative} proves \eqref{eq:pi-homogeneity}, which makes
\eqref{eq:projective-map} well defined.

Because $m$ is odd and $\gcd(k,m)=1$,
\begin{equation}
 \gcd(2^k+1,2^m-1)=1.
 \label{eq:scalar-gcd}
\end{equation}
Indeed, if an odd integer $d$ divides both numbers, the multiplicative
order of $2$ modulo $d$ divides both $2k$ and $m$, hence is one; thus
$d=1$.  Therefore $\lambda\mapsto\lambda^{-(\sigma+1)}$ permutes $K^*$.

Suppose $[\pi_F(X)]=[\pi_F(Y)]$.  There is a $\mu\in K^*$ such that
$\pi_F(Y)=\mu\pi_F(X)$.  Choose $\lambda\in K^*$ with
$\mu=\lambda^{-(\sigma+1)}$.  By \eqref{eq:pi-homogeneity},
$\pi_F(Y)=\pi_F(\lambda X)$.  Lemma~\ref{lem:pi-bijection}, applied in
the odd binary dimension $3m$, gives $Y=\lambda X$.  Hence $[Y]=[X]$.
The map is injective, and the two finite projective planes have the same
number of points, so it is bijective.
\end{IEEEproof}

The target in \eqref{eq:projective-map} is the dual plane for the following
reason.  Equation~\eqref{eq:orthoderivative} first identifies $\pi_F(X)$ as
the binary trace normal in
$\operatorname{im}_{\mathbb F_2}D_XF=\{Z:\Tr(\pi_F(X)^TZ)=0\}$.
Independently, the same vector defines the $K$-linear covector
$Z\mapsto\pi_F(X)^TZ$; its $K^\times$-class is therefore a point of the
dual $K$-projective plane.  The theorem does not assert that $\Phi_F$ is a
correlation or a polarity; such terms impose incidence properties not proved
here.

\section{The Triprojective Realization}
\label{sec:triprojective}

We now verify the abstract hypotheses for the family of Göloğlu and
Kölsch~\cite{GologluKolschTriprojective2026}.  Let $a,b,c\in K$ with
$a\ne0$, and define $F=(F_1,F_2,F_3):K^3\to K^3$ by
\begin{align}
F_1(x,y,z)&=x^{\sigma+1}+ay^\sigma z+bx^\sigma y+cx^\sigma z,
\nonumber\\
F_2(x,y,z)&=ay^{\sigma+1}+z^\sigma x+bz^\sigma y+cx^\sigma y,
\label{eq:target-family}\\
F_3(x,y,z)&=z^{\sigma+1}+x^\sigma y.
\nonumber
\end{align}
The admissibility polynomial is
\begin{equation}
 p(T)=aT^{\sigma^2+\sigma+1}+bT^{\sigma+1}+cT+1.
 \label{eq:admissibility-polynomial}
\end{equation}
We assume that $p$ has no root in $K$.  Under our standing hypothesis
$\gcd(k,m)=1$, the cited construction is APN; when $m$ is odd it is also a
permutation~\cite[Ths.~1.3 and~1.6]{GologluKolschTriprojective2026}.
We shall recover the latter conclusion independently from the rank-two
frame in Theorem~\ref{thm:permutation}.

For $X=(x,y,z)^T$, put
\begin{equation}
 u=x+by+cz,\qquad
 w=ay^\sigma+bz^\sigma+cx^\sigma.
 \label{eq:u-w}
\end{equation}
Direct polarization of \eqref{eq:target-family} gives
$D_XF(Y)=A_XY^\sigma+B_XY$, where
\begin{equation}
A_X=
\begin{pmatrix}
u&az&0\\
cy&ay&x+by\\
y&0&z
\end{pmatrix},\qquad
B_X=
\begin{pmatrix}
x^\sigma&bx^\sigma&ay^\sigma+cx^\sigma\\
z^\sigma&w&0\\
0&x^\sigma&z^\sigma
\end{pmatrix}.
\label{eq:target-matrices}
\end{equation}
Define the right-kernel candidates
\begin{align}
 R&=(w,z^\sigma,x^\sigma)^T,
 \label{eq:target-R}\\
 S&=((az)^\tau,u^\tau,(ay)^\tau)^T,
 \label{eq:target-S}
\end{align}
and the left-kernel candidates
\begin{align}
 \ell_A&=(y,z,x+by)^T,
 \label{eq:target-ellA}\\
 \ell_B&=(z^\sigma,x^\sigma,ay^\sigma+cx^\sigma)^T.
 \label{eq:target-ellB}
\end{align}

\begin{theorem}[Explicit rank-two realization]
\label{thm:family-rank}
For every $X\ne0$,
\begin{align}
 \operatorname{adj}(A_X)&=S^\sigma\ell_A^T,
 \label{eq:adj-A}\\
 \operatorname{adj}(B_X)&=R\ell_B^T.
 \label{eq:adj-B}
\end{align}
Consequently,
\[
 \operatorname{rank}_K A_X=\operatorname{rank}_K B_X=2,
\]
$B_XR=0$, $A_XS^\sigma=0$, and Theorem~\ref{thm:normal-form} applies to
$D_XF$.
\end{theorem}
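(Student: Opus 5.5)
The plan is to verify \eqref{eq:adj-A} and \eqref{eq:adj-B} entrywise, and then to read off the rank and kernel statements from the factored adjugates. First I will compute the nine cofactors of $B_X$ from \eqref{eq:target-matrices}. For example, the $(1,1)$ cofactor is $wz^\sigma$, and the first column of $\operatorname{adj}(B_X)$ consists of the cofactors of row $1$ of $B_X$, namely $\bigl(wz^\sigma,\ -z^{2\sigma},\ z^\sigma x^\sigma\bigr)^T$, up to signs that vanish in characteristic two. Each such entry should equal $R_i(\ell_B)_j$. Here the entries of $\ell_B$ come from the first factor in each column, and the entries of $R$ come from the common factor $w,z^\sigma,x^\sigma$. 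I will do the same for $A_X$ with $S^\sigma=(az,u,ay)^T$ and $\ell_A=(y,z,x+by)^T$. For instance, the $(1,1)$ entry of $\operatorname{adj}(A_X)$ is $ayz$, which should equal $az\cdot y$. The only nontrivial cofactors are those where $u$ or $w$ appear. In those, the substitutions $u=x+by+cz$ and $w=ay^\sigma+bz^\sigma+cx^\sigma$ must be expanded, and the cross terms cancel in characteristic two. I expect this bookkeeping, in particular the cofactors involving products like $u\cdot z$ against $az\cdot(x+by)$, to be the main (purely computational) obstacle. I will organize the check column by column, so that each column of the adjugate is visibly a multiple of $S^\sigma$ (respectively $R$).

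Next, I will derive the consequences. From $M\operatorname{adj}(M)=\operatorname{adj}(M)M=\det(M)I$ and the rank-one factorization, $\det A_X$ and $\det B_X$ must vanish. I will either check this directly or argue that a rank-one adjugate forces $\det=0$, since $\det(\operatorname{adj}M)=(\det M)^2$ and a rank-one $3\times3$ matrix has determinant zero. Then $A_X\operatorname{adj}(A_X)=0$ gives $A_XS^\sigma\ell_A^T=0$, and $\operatorname{adj}(A_X)A_X=0$ gives $S^\sigma\ell_A^TA_X=0$. So whenever $\ell_A\ne0$ we get $A_XS^\sigma=0$, and whenever $S\ne0$ we get $\ell_A^TA_X=0$. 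The analogous statements hold for $B_X$ with $R$ and $\ell_B$.

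For the rank to be exactly two, I need $\operatorname{adj}\ne0$, i.e., both factors nonzero for every $X\ne0$. For the vectors $\ell_A=(y,z,x+by)$ and $\ell_B=(z^\sigma,x^\sigma,\dots)$ this is immediate: vanishing forces $y=z=0$ and then $x=0$. For $R=(w,z^\sigma,x^\sigma)$, vanishing gives $x=z=0$, and then $w=ay^\sigma\ne0$ since $a\ne0$ and $y\ne0$. For $S$, vanishing gives $y=z=0$ (using $a\ne0$), and then $u=x\ne0$. Hence every adjugate is a nonzero rank-one matrix, so $\operatorname{rank}A_X=\operatorname{rank}B_X=2$, and $R$, $S$ are nonzero kernel generators. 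Finally, APN of the cited construction gives $\ker_{\mathbb F_2}D_XF=\mathbb F_2X$ via \eqref{eq:apn-nullity}. Together with the rank statements, this verifies \eqref{eq:frame-hypotheses}, so Theorem~\ref{thm:normal-form} applies to $D_XF$ with these $R,S$.
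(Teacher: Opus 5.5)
Your proposal is correct and follows the paper's proof. Both arguments expand the cofactors to get the rank-one factorizations, show $R,S,\ell_A,\ell_B$ are nonzero for $X\ne0$, deduce that the determinants vanish, and cite the APN theorem for the binary kernel; your derivation of $\det A_X=0$ and $A_XS^\sigma=0$ from $\det(\operatorname{adj}M)=(\det M)^2$ and $M\operatorname{adj}(M)=\det(M)I$ is only a harmless variant of the paper's direct multiplication. One small slip: for $\ell_B$ the order is $z=x=0$ first, and then $ay^\sigma=0$ forces $y=0$ using $a\ne0$, not ``$y=z=0$ then $x=0$'' as for $\ell_A$.
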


\begin{IEEEproof}
Using $S^\sigma=(az,u,ay)^T$, expansion of the nine $2\times2$ cofactors
of the matrices in \eqref{eq:target-matrices} gives
\[
\operatorname{adj}(A_X)=
\begin{pmatrix}az\\u\\ay\end{pmatrix}
\begin{pmatrix}y&z&x+by\end{pmatrix}
\]
and
\[
\operatorname{adj}(B_X)=
\begin{pmatrix}w\\z^\sigma\\x^\sigma\end{pmatrix}
\begin{pmatrix}z^\sigma&x^\sigma&ay^\sigma+cx^\sigma\end{pmatrix}.
\]
These are precisely \eqref{eq:adj-A} and \eqref{eq:adj-B}; all minus
signs disappear in characteristic two.  Direct multiplication also gives
$A_XS^\sigma=0$ and $B_XR=0$.

None of $R,S,\ell_A,\ell_B$ vanishes when $X\ne0$.  For example,
$R=0$ first forces $x=z=0$ and then $ay^\sigma=0$; since $a\ne0$, this
forces $y=0$.  The other three cases follow in the same order from
\eqref{eq:target-S}--\eqref{eq:target-ellB}.  Thus both adjugates are
nonzero, so both matrices have rank at least two.  Their displayed nonzero
right-kernel vectors show that their determinants vanish.  Each rank is
therefore exactly two.

The root-free hypothesis and the cited APN theorem give
$\ker_{\mathbb F_2}D_XF=\mathbb F_2X$.  The three hypotheses of
Theorem~\ref{thm:normal-form} are now verified.  Notice that the adjugate
identities themselves used neither APN nor the root-free condition.
\end{IEEEproof}

Set
\begin{equation}
 E_X=[X\ R\ S],\quad \Delta_X=\det E_X,\quad
 P_X=A_XR^\sigma,\quad Q_X=B_XS,\quad
 J_X=[F(X)\ P_X\ Q_X].
 \label{eq:target-frames}
\end{equation}
The first output vector is $F(X)$ by Lemma~\ref{lem:euler-split}.
Theorem~\ref{thm:normal-form} proves $\Delta_X\ne0$ and
$J_X\in\operatorname{GL}(3,K)$.  We next determine its determinant and
all three dual rows.

For vectors in $K^3$, let $r\times s$ denote the usual cross product; in
characteristic two, subtraction in its coordinates is addition.  We use
the identity
\begin{equation}
 (r\times s)^T(u\times v)
 =(r^Tu)(s^Tv)+(r^Tv)(s^Tu).
 \label{eq:lagrange-char2}
\end{equation}

\begin{theorem}[Determinant factorization and full dual frame]
\label{thm:determinant}
For every $X\ne0$,
\begin{equation}
 \boxed{\ \det J_X=\Delta_X^{\sigma+1}\ }
 \label{eq:det-factorization}
\end{equation}
as an identity of $K$-valued functions.  Moreover,
\begin{equation}
 J_X^{-1}=
 \begin{pmatrix}
 \pi_F(X)^T\\
 (\Delta_X^\sigma)^{-1}\ell_B^T\\
 \Delta_X^{-1}\ell_A^T
 \end{pmatrix},
 \qquad
 \pi_F(X)^T=\frac{(P_X\times Q_X)^T}{\Delta_X^{\sigma+1}}.
 \label{eq:target-full-dual}
\end{equation}
\end{theorem}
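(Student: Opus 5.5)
The plan is to exploit Corollary~\ref{cor:complete-dual} together with the explicit data of Theorem~\ref{thm:family-rank}. By that corollary, the rows of $J_X^{-1}$ are $\pi_F(X)^T$ (first row, by Corollary~\ref{cor:exact-coordinate} and the identification in Theorem~\ref{thm:apn-normalization}), $p_0^{-1}\ell_B^T$ and $q_0^{-1}\ell_A^T$, with $p_0=\ell_B^TP_X$ and $q_0=\ell_A^TQ_X$. Everything therefore reduces to evaluating the two scalars $p_0,q_0$ in terms of $\Delta_X=\det E_X$, and then expressing $\det J_X$ and the first row through cross products.

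First I compute $q_0=\ell_A^TB_XS$ and $p_0=\ell_B^TA_XR^\sigma$. For $q_0$, the adjugate identity $\operatorname{adj}(B_X)=R\ell_B^T$ is not directly helpful. Instead, I would contract directly. The row vector $\ell_A^TB_X$ is an explicit linear form in the entries of $X$, and the goal is to show $\ell_A^TB_X=(X\times R)^T$ up to the right scalar. Then $q_0=(X\times R)^TS=\det[X\ R\ S]=\Delta_X$. Similarly, I aim to show $\ell_B^TA_X=((X\times S)^\sigma)^T$, i.e.\ $\ell_B^TA_X$ equals the $\sigma$-image of the cross product $X^{\tau}$-twisted appropriately. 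The natural guess is $\ell_B^TA_X=(X^\sigma\times S^\sigma)^T$, giving $p_0=(X^\sigma\times S^\sigma)^TR^\sigma=\det[X\ S\ R]^\sigma=\Delta_X^\sigma$ in characteristic two. These are coordinatewise polynomial identities in $x,y,z,a,b,c$ and are checked by expanding the three entries of each row vector using \eqref{eq:u-w}. This gives the second and third rows in \eqref{eq:target-full-dual}.

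Next comes the first row. Since $n^TP_X=n^TQ_X=0$ with $n=\pi_F(X)$ and $P_X,Q_X$ independent, $n$ is a $K$-multiple of $P_X\times Q_X$. The normalization $n^TF(X)=1$ then gives $n=(P_X\times Q_X)/\det J_X$, because $(P_X\times Q_X)^TF(X)=\det[F(X)\ P_X\ Q_X]=\det J_X$. Thus the displayed formula for $\pi_F(X)$ is equivalent to the determinant factorization. For that factorization I compute $\det J_X=\det(J_X^{-1})^{-1}$ from the rows just found. Equivalently, $\det J_X\cdot\det(\text{rows})=1$, with the last two rows being $\Delta_X^{-\sigma}\ell_B$ and $\Delta_X^{-1}\ell_A$, so the first row is $\Delta_X^{-\sigma-1}(\ell_B\times\ell_A)\cdot(\text{scalar})$. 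A cleaner route is to compute $(P_X\times Q_X)^T(\ell_B\times\ell_A)$ via \eqref{eq:lagrange-char2}. Since $\ell_B^TQ_X=0=\ell_A^TP_X$, this equals $(P_X^T\ell_B)(Q_X^T\ell_A)=\Delta_X^{\sigma+1}$. Then, because $\ell_B\times\ell_A$ spans the line orthogonal to $\ell_A,\ell_B$, which is the line spanned by $F(X)$, we get $\ell_B\times\ell_A=\kappa F(X)$. Here $F(X)$ is orthogonal to both rows since $\ell_A^TA_X=0$ and $\ell_B^TB_X=0$, and $F(X)=A_XX^\sigma=B_XX$. Computing $\kappa$ requires one more explicit contraction, and I expect $\kappa=1$. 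Granting it, $\det J_X=(P_X\times Q_X)^TF(X)=\Delta_X^{\sigma+1}$.

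The main obstacle is the verification of the explicit contraction identities: $\ell_A^TB_X=(X\times R)^T$, $\ell_B^TA_X=(X^\sigma\times S^\sigma)^T$, and $\ell_B\times\ell_A=F(X)$. These are the family-specific polynomial identities and might hold only up to a factor such as $a$ or a $\tau$-twist. I would check them coordinate by coordinate first, adjusting the scalar normalization if needed. Any constant discrepancy must cancel in the end, since $J_XJ_X^{-1}=I$ is forced by Corollary~\ref{cor:complete-dual}.
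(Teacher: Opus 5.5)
Your plan is essentially the paper's proof: the three family-specific identities you isolate, $B_X^T\ell_A=X\times R$, $A_X^T\ell_B=X^\sigma\times S^\sigma$ and $\ell_A\times\ell_B=F(X)$ (equal to $\ell_B\times\ell_A$ in characteristic two, so $\kappa=1$), are exactly the ones the paper obtains by direct multiplication, and they hold exactly. The contractions $q_0=\Delta_X$ and $p_0=\Delta_X^\sigma$, the identity \eqref{eq:lagrange-char2} and the normalized cofactor row then finish the argument as you describe. One caution: your fallback that any constant discrepancy ``must cancel'' by Corollary~\ref{cor:complete-dual} is not valid, because that corollary only gives $J_XJ_X^{-1}=I$ for whatever values $p_0,q_0$ take; a stray factor would falsify the stated formulas rather than cancel, so the coordinatewise verification is an essential step, not a normalization you may adjust.
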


\begin{IEEEproof}
Direct multiplication using \eqref{eq:target-matrices} gives the
family-specific identities
\begin{equation}
 \ell_A\times\ell_B=F(X),\qquad
 B_X^T\ell_A=X\times R,\qquad
 A_X^T\ell_B=X^\sigma\times S^\sigma.
 \label{eq:target-cross-identities}
\end{equation}
The left-kernel relations in Theorem~\ref{thm:family-rank} and
\eqref{eq:target-cross-identities} yield the six contractions
\begin{equation}
\begin{array}{c|ccc}
 &F(X)&P_X&Q_X\\ \hline
\ell_B^T&0&\Delta_X^\sigma&0\\
\ell_A^T&0&0&\Delta_X.
\end{array}
\label{eq:target-contractions}
\end{equation}
For instance,
\[
 \ell_A^TQ_X=(B_X^T\ell_A)^TS
 =(X\times R)^TS=\Delta_X,
\]
while
\[
 \ell_B^TP_X=(A_X^T\ell_B)^TR^\sigma
 =(X^\sigma\times S^\sigma)^TR^\sigma=\Delta_X^\sigma.
\]
The remaining four entries are zero because $F(X)=\ell_A\times\ell_B$
or because the appropriate row is a left-kernel vector.

Now apply \eqref{eq:lagrange-char2}:
\begin{align*}
 \det J_X
 &=F(X)^T(P_X\times Q_X)\\
 &=(\ell_A\times\ell_B)^T(P_X\times Q_X)\\
 &=(\ell_A^TP_X)(\ell_B^TQ_X)
   +(\ell_A^TQ_X)(\ell_B^TP_X)\\
 &=\Delta_X^{\sigma+1}.
\end{align*}
This proves \eqref{eq:det-factorization}.  The first row in
\eqref{eq:target-full-dual} is the normalized cofactor row and hence pairs
to one with $F(X)$ and to zero with $P_X,Q_X$.  The last two rows follow
from \eqref{eq:target-contractions}.  Thus their product with $J_X$ is the
identity, completing the proof.
\end{IEEEproof}

The root-free hypothesis also has the twisted-polynomial interpretation
recalled in~\cite[Sec.~2.3, Th.~2.5]{GologluKolschTriprojective2026}: in
$K[t;\sigma]$, where $t\alpha=\alpha^\sigma t$, the polynomial
\eqref{eq:admissibility-polynomial} is root free in $K$ exactly when
$at^3+bt^2+ct+1$ has no linear right divisor.  For the present frames, the
following direct equivalent is the one needed here.

\begin{proposition}[Root-free frame criterion]
\label{prop:root-free-frame}
The polynomial $p$ in \eqref{eq:admissibility-polynomial} is root free in
$K$ if and only if $\Delta_X\ne0$ for every $X\ne0$.
\end{proposition}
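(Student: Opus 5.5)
The plan is to prove the two implications separately. The forward one will be read off from Theorem~\ref{thm:normal-form}. The converse will pass through the frame identity \eqref{eq:frame-normal-form} and the determinant factorization \eqref{eq:det-factorization}, both read as algebraic identities valid for every $X\ne0$ without any root-free or APN input.

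\emph{Root free $\Rightarrow$ $\Delta_X\ne0$.} Assume $p$ has no root in $K$. Then the cited result \cite[Th.~1.3]{GologluKolschTriprojective2026} gives $\ker_{\mathbb F_2}D_XF=\mathbb F_2X$ for every $X\ne0$. Theorem~\ref{thm:family-rank} gives rank pair $(2,2)$ with kernel generators $R$ and $S^\sigma$. Theorem~\ref{thm:normal-form} then states that $E_X=[X\ R\ S]$ is nonsingular, that is, $\Delta_X\ne0$.

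\emph{$\Delta_X\ne0$ for all $X\ne0$ $\Rightarrow$ root free.} First I would observe that the substitution giving \eqref{eq:frame-normal-form} used only $B_XR=0$, $A_XS^\sigma=0$ and the Euler splitting $A_XX^\sigma=B_XX=F(X)$. All three are unconditional for the family: the first two by Theorem~\ref{thm:family-rank}, the third by Lemma~\ref{lem:euler-split}. Hence
\[
 D_XF\circ E_X=J_X\circ N_\sigma
\]
holds for every $X\ne0$. Second, I would recheck that the proof of Theorem~\ref{thm:determinant} is also purely algebraic. The cross identities \eqref{eq:target-cross-identities}, the contraction table \eqref{eq:target-contractions} and the char-two Lagrange identity \eqref{eq:lagrange-char2} never invoke APN or the root-free condition. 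So $\det J_X=\Delta_X^{\sigma+1}$ holds identically.

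Consequently, if $\Delta_X\ne0$ then $E_X,J_X\in\operatorname{GL}(3,K)$. In that case $\ker_{\mathbb F_2}D_XF=E_X(\ker_{\mathbb F_2}N_\sigma)=E_X(\mathbb F_2e_1)=\mathbb F_2X$ by Lemma~\ref{lem:trace-image}. If this holds for all $X\ne0$, then $F$ is APN by \eqref{eq:apn-nullity}.

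The remaining step, which I expect to be the main obstacle, is turning APN back into root-freeness of $p$. I would first try a direct construction: given $T_0\in K$ with $p(T_0)=0$, produce an explicit $X\ne0$ with $\Delta_X=0$. A natural guess is to take $X$ along the direction singled out by the linear right divisor $t-T_0$ of $at^3+bt^2+ct+1$ in $K[t;\sigma]$. Concretely, I would try $(x,y,z)$ built from $1,T_0,T_0^{\sigma+1}$ up to suitable Frobenius twists. I would then expand $\det[X\ R\ S]$ using \eqref{eq:target-R}--\eqref{eq:target-S} and check that it collapses to a multiple of $p(T_0)$. If this expansion becomes unmanageable, the fallback is to invoke the converse direction of the Göloğlu--Kölsch criterion: a root of $p$ yields a derivative with a kernel vector outside $\mathbb F_2X$. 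That contradicts the APN conclusion just derived and completes the equivalence.
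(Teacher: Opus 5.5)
Your forward direction matches the paper's. Root-freeness gives APN via Göloğlu--Kölsch, Theorem~\ref{thm:family-rank} supplies the kernel vectors, and Theorem~\ref{thm:normal-form} forces $E_X$ to be nonsingular.

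\textbf{The converse is not actually proved.} Your detour is correct: the frame identity and $\det J_X=\Delta_X^{\sigma+1}$ really are unconditional, so $\Delta_X\ne0$ for all $X\ne0$ implies that $F$ is APN. But this only moves the problem to ``APN $\Rightarrow$ root free'', and you leave that step open. Your first option there, an explicit $X$ with $\Delta_X=0$ built from a root, does not use APN at all. If it works, the whole detour is unnecessary. As written it is only a guess: you give no specific $X$ and no verification. Your fallback cites a converse of the Göloğlu--Kölsch theorem that this paper never quotes; only root-free $\Rightarrow$ APN is cited. Proving that converse needs exactly the computation you have not done.

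\textbf{The missing step.} It is a concrete choice of $X$ and a one-line check, not a determinant expansion. If $p(s)=0$, then $s\ne0$ because $p(0)=1$. Take $X=(1,s^{\sigma+1},s)^T$. Then
$w=ay^\sigma+bz^\sigma+cx^\sigma=as^{\sigma^2+\sigma}+bs^\sigma+c$,
and dividing $p(s)=0$ by $s$ gives $w=s^{-1}$. Hence
$R=(w,z^\sigma,x^\sigma)^T=(s^{-1},s^\sigma,1)^T=s^{-1}X$.
Two columns of $E_X$ are proportional, so $\Delta_X=0$, which is the contrapositive of the converse.

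For arbitrary $s\ne0$ one has $R+s^{-1}X=s^{-1}p(s)\,e_1$. This is the precise sense in which $\Delta_X$ is a multiple of $p(s)$, as you hoped.

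The same $X$ also gives $F(X)=B_XX=sB_XR=0$. Hence $D_XF(\lambda X)=(\lambda^\sigma+\lambda)F(X)=0$ for all $\lambda\in K$, so $KX\subseteq\ker D_XF$. That is the non-APN statement your fallback needed. For the proposition itself, however, the observation $R=s^{-1}X$ already suffices, and no APN argument is needed in this direction.
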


\begin{IEEEproof}
If $p$ is root free, the APN theorem quoted above and
Theorem~\ref{thm:normal-form} give $E_X\in\operatorname{GL}(3,K)$ for every
$X\ne0$.  Conversely, if $p(s)=0$, then $s\ne0$.  For
\[
 X=(1,s^{\sigma+1},s)^T
\]
the root equation gives
\[
 w=as^{\sigma^2+\sigma}+bs^\sigma+c=s^{-1}.
\]
Equations \eqref{eq:target-R} and \eqref{eq:target-frames} then give
$R=s^{-1}X$ and $\Delta_X=0$.
\end{IEEEproof}

Combining Theorem~\ref{thm:normal-form} with \eqref{eq:target-full-dual} also
describes the derivative image exactly.  If
$Y=\alpha X+\beta R+\gamma S$, then
\begin{equation}
 D_XF(Y)=J_X
 (\alpha^\sigma+\alpha,\beta^\sigma,\gamma)^T,
 \label{eq:target-normal-form}
\end{equation}
and hence
\begin{equation}
 \operatorname{im}D_XF=
 \{tF(X)+pP_X+rQ_X:\Tr(t)=0\}.
 \label{eq:target-image}
\end{equation}
The ordinary $K$-plane
$\langle P_X,Q_X\rangle_K$ is therefore the canonical slice
\begin{equation}
 \operatorname{im}D_XF\cap\ker_K(\pi_F(X)^T)
 =\langle P_X,Q_X\rangle_K.
 \label{eq:target-slice}
\end{equation}
The full derivative image is a binary trace hyperplane of size $q^3/2$;
it is not itself the $K$-plane in \eqref{eq:target-slice}.  Finally,
\eqref{eq:det-factorization} is a family-specific finite-field identity:
it uses the Frobenius relations in $K$ and is not asserted as a formal
polynomial identity over unrestricted indeterminates.

\section{A Cubic Norm-Twist Realization}
\label{sec:norm-twist}

The normal-form theorem is not confined to pure $\sigma$-quadratic maps.  We
show this using the cubic norm-twist construction of Li, Zhou, Li, and
Qu~\cite{LiZhouLiQu2022}.

Let $\Lfield=\mathbb F_{2^{3m}}$ and $K=\mathbb F_{2^m}$.  Define
\[
 \rho(z)=z^{2^m},\qquad \tau(z)=z^{2^s},\qquad
 \sigma=\tau|_K,
\]
where $\gcd(s,m)=1$.  Let $M_\mu$ denote multiplication by $\mu\in\Lfield$,
and put
\begin{equation}
 H=(\rho+M_\mu)\tau,\qquad \Lambda=H+I.
 \label{eq:norm-H-Lambda}
\end{equation}
Choose $v\in K^*$ and $\mu\in\Lfield$ such that
\begin{equation}
 N_{\Lfield/K}(\mu)\ne1
 \quad\text{and}\quad \Lambda\text{ permutes }\Lfield.
 \label{eq:norm-hypotheses}
\end{equation}
Writing
\begin{equation}
 Q_0(z)=z\rho(z),\qquad
 b(a,w)=a\rho(w)+w\rho(a),
 \label{eq:norm-Q-b}
\end{equation}
the family is
\begin{equation}
 F(z)=Q_0(\Lambda z)+vQ_0(z).
 \label{eq:norm-family}
\end{equation}
Theorem~6 of~\cite{LiZhouLiQu2022} proves that
\eqref{eq:norm-family} is APN.  Bartoli, Calderini, Polverino, and Zullo
prove the existence of admissible parameters for every $m\ge3$ in the
$s=1$ subfamily~\cite[Th.~4.6 and Cor.~4.7]{BartoliCalderiniPolverinoZullo2022}.

\begin{lemma}
\label{lem:norm-b-rank}
For $a\ne0$, the $K$-linear map $\psi_a:\Lfield\to\Lfield$ given by
$\psi_a(w)=b(a,w)$ has kernel $aK$ and rank two over $K$.
\end{lemma}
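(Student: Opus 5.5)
Throughout, $\Lfield=\mathbb F_{q^3}$ with $q=2^m$, and $\rho$ generates $\operatorname{Gal}(\Lfield/K)$. We show $\ker\psi_a=aK$ and then use rank--nullity over $K$.

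First, $\psi_a$ is $K$-linear. For $\lambda\in K$ we have $\rho(\lambda)=\lambda$, so $b(a,\lambda w)=\lambda b(a,w)$, and additivity is clear. Since $\dim_K\Lfield=3$, once $\ker\psi_a=aK$ is proved, rank--nullity gives $\operatorname{rank}_K\psi_a=3-1=2$.

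Next, $aK\subseteq\ker\psi_a$. For $w=\lambda a$ with $\lambda\in K$,
\[
b(a,\lambda a)=a\lambda\rho(a)+\lambda a\rho(a)=2\lambda a\rho(a)=0
\]
in characteristic two.

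For the reverse inclusion, suppose $a\rho(w)+w\rho(a)=0$ with $w\ne0$. Since $a\ne0$ we also have $\rho(a)\ne0$, so we may divide by $a\rho(a)$:
\[
\frac{\rho(w)}{\rho(a)}=\frac{w}{a}.
\]
Setting $\lambda=w/a$, this reads $\rho(\lambda)=\lambda$. The fixed field of $\rho$ in $\Lfield$ is exactly $K$, so $\lambda\in K$ and $w\in aK$. Hence $\ker\psi_a=aK$, which is one-dimensional over $K$, and the rank statement follows.

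No step is a real obstacle; the only care needed is the division by $\rho(a)$, which is justified because $a\ne0$.
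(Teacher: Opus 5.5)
Your proposal is correct and follows essentially the same route as the paper: both write $w=a\lambda$ (the paper uses $t$), reduce $b(a,w)=0$ to $\rho(\lambda)=\lambda$ via $b(a,at)=a\rho(a)(\rho(t)+t)$, use $\operatorname{Fix}(\rho)=K$, and finish by rank--nullity in the three-dimensional $K$-space $\Lfield$. Your explicit checks of $K$-linearity and of $\rho(a)\ne0$ are details the paper leaves implicit, and the assumption $w\ne0$ is harmless but not needed.
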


\begin{IEEEproof}
Write $w=at$.  Since $\rho$ fixes $K$,
\[
 b(a,at)=a\rho(a)(\rho(t)+t).
\]
This vanishes exactly when $t\in\operatorname{Fix}(\rho)=K$.  Thus the
kernel is the one-dimensional $K$-space $aK$, and rank--nullity in the
cubic extension gives rank two.
\end{IEEEproof}

\begin{theorem}[Norm-twist rank-two realization]
\label{thm:norm-twist-rank}
Fix any $K$-basis of $\Lfield$.  For every $X\ne0$, the derivative of
\eqref{eq:norm-family} has a decomposition
\begin{equation}
 D_XF(Y)=A_XY^\sigma+B_XY,
 \qquad
 \operatorname{rank}_K A_X=\operatorname{rank}_K B_X=2.
 \label{eq:norm-rank-split}
\end{equation}
Theorem~\ref{thm:normal-form} applies with
\begin{align}
 R_X&=\Lambda X+vX,& S_X&=H^{-1}(\Lambda X),
 \label{eq:norm-RS}\\
 T_X&=b(HX,X),&
 P_X&=b(\Lambda X,H(R_X)),&Q_X&=b(R_X,S_X).
 \label{eq:norm-TPQ}
\end{align}
In particular, its exact adjoint normal satisfies
\begin{equation}
 \pi_F(X)^TT_X=1.
 \label{eq:norm-normalization}
\end{equation}
\end{theorem}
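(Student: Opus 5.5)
The plan is to compute the polarization of \eqref{eq:norm-family} directly in $\Lfield$, and only afterwards translate into matrices over the fixed $K$-basis. Since $\rho$ fixes $K$, the form $b$ in \eqref{eq:norm-Q-b} is symmetric and $K$-bilinear. The polar of $Q_0$ is $D_XQ_0(Y)=b(X,Y)$, so
\[
 D_XF(Y)=b(\Lambda X,\Lambda Y)+v\,b(X,Y).
\]
Substituting $\Lambda Y=HY+Y$ and using $v\in K$ regroups this as
\[
 D_XF(Y)=b(\Lambda X,HY)+b(\Lambda X+vX,\,Y)=\psi_{\Lambda X}(HY)+\psi_{R_X}(Y).
\]
The second summand is $K$-linear in $Y$. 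For the first, $\tau$ is $\sigma$-semilinear over $K$, because $\tau|_K=\sigma$. Hence, in the chosen $K$-basis, $\tau Y=CY^\sigma$ for a fixed $C\in\operatorname{GL}(3,K)$. The maps $\rho$ and $M_\mu$ are $K$-linear, so $Y\mapsto\psi_{\Lambda X}(HY)$ has the form $A_XY^\sigma$. This gives \eqref{eq:norm-rank-split} with $B_X$ the matrix of $\psi_{R_X}$ and $A_X$ the matrix of $\psi_{\Lambda X}\circ(\rho+M_\mu)$ composed with $C$. By the two-term decomposition lemma, these are the intrinsic coefficient matrices.

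Next I verify the ranks with Lemma~\ref{lem:norm-b-rank}. First, $\rho+M_\mu$ is injective: a nonzero kernel element $z$ would satisfy $z^{2^m-1}=\mu$, forcing $N_{\Lfield/K}(\mu)=1$. So $\rho+M_\mu$ and $H$ are bijective. Also $\Lambda X\ne0$ because $\Lambda$ permutes $\Lfield$. Hence $\operatorname{rank}_KA_X=\operatorname{rank}\psi_{\Lambda X}=2$, and the kernel of $Y\mapsto\psi_{\Lambda X}(HY)$ is the $K$-line spanned by $S_X=H^{-1}(\Lambda X)$. In the notation of Theorem~\ref{thm:normal-form}, this says that $S_X$ plays the role of $S$, i.e. $A_XS_X^\sigma=0$ in coordinates.

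For $B_X$ I need $R_X\ne0$, and this is the step I expect to be delicate. My first approach is to use APN. If $R_X=0$, then $B_X=0$ and $D_XF(Y)=\psi_{\Lambda X}(HY)$ vanishes on the whole $K$-line $KS_X$. That line has $q>2$ elements, contradicting $\ker_{\mathbb F_2}D_XF=\mathbb F_2X$. With $R_X\ne0$, Lemma~\ref{lem:norm-b-rank} gives rank two and kernel $KR_X$. The hypotheses \eqref{eq:frame-hypotheses} then hold, since APN supplies binary nullity one.

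Finally I read off the frame vectors \eqref{eq:frame-vectors}:
\begin{align*}
 T_X&=b(\Lambda X,HX)=b(HX,HX)+b(X,HX)=b(HX,X),\\
 P_X&=b(\Lambda X,H(R_X)),\qquad Q_X=b(R_X,S_X),
\end{align*}
using $b(a,a)=0$ in characteristic two. This matches \eqref{eq:norm-TPQ}. Theorem~\ref{thm:normal-form} then gives the normal form, and Corollary~\ref{cor:exact-coordinate} identifies the first row of $J_X^{-1}$ as the unique nonzero trace normal, which is $\pi_F(X)$ by \eqref{eq:orthoderivative}. Its first-row identity is exactly \eqref{eq:norm-normalization}. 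The main bookkeeping risk is keeping straight that the abstract $S$ with $A S^\sigma=0$ corresponds to $S_X$, rather than to $S_X^\tau$, under the semilinear coordinate convention $\tau Y=CY^\sigma$.
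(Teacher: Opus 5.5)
Your proposal is correct and follows the paper's proof almost step for step. Both use the same polarization $D_XF(Y)=b(\Lambda X,HY)+b(\Lambda X+vX,Y)$, the same norm argument for invertibility of $\rho+M_\mu$, the same kernel vectors $R_X,S_X$ from Lemma~\ref{lem:norm-b-rank}, the same computation of $T_X$ by alternation of $b$, and the same final appeal to Corollary~\ref{cor:exact-coordinate}.

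The one genuine difference is how you show $R_X=\Lambda X+vX\neq0$. The paper cites Lemma~7 of Li--Zhou--Li--Qu, which says $H+\beta I$ permutes $\Lfield$ for every $\beta\in K$, and takes $\beta=1+v$. You argue instead that $R_X=0$ would make $D_XF(Y)=b(\Lambda X,HY)$ vanish on the whole $K$-line $KS_X$. That line has $q\ge4$ elements, which contradicts the APN kernel $\mathbb F_2X$.

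Your argument is valid and more self-contained. It uses only the APN theorem, which both proofs need anyway for binary nullity one. It is also the same ``no $K$-line inside a two-element kernel'' mechanism that drives the proof of Theorem~\ref{thm:normal-form}.

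The paper's route instead gets $\operatorname{rank}_KB_X=2$ directly from the parameter hypotheses, without using the APN property. This parallels the triprojective case, where the adjugate rank computation deliberately uses neither APN nor root-freeness. Your route makes that rank statement a consequence of APN.
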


\begin{IEEEproof}
First, $H$ is invertible.  Indeed, if
$(\rho+M_\mu)(z)=0$ for a nonzero $z$, then $\rho(z)=\mu z$.
Applying $\rho$ three times gives $z=N_{\Lfield/K}(\mu)z$, contrary to
\eqref{eq:norm-hypotheses}; $\tau$ is itself a permutation.
The map $H$ is $\sigma$-semilinear over $K$, so in the fixed basis there
is a $C\in\operatorname{GL}(3,K)$ such that $H(Y)=CY^\sigma$.

Polarizing \eqref{eq:norm-family}, expanding $\Lambda=H+I$, and using
bilinearity of $b$ give
\begin{align}
 D_XF(Y)
 &=b(\Lambda X,\Lambda Y)+v b(X,Y)\nonumber\\
 &=b(\Lambda X,H(Y))+b(\Lambda X+vX,Y).
 \label{eq:norm-derivative}
\end{align}
Thus $A_XY^\sigma=b(\Lambda X,CY^\sigma)$ and
$B_XY=b(\Lambda X+vX,Y)$.  Since $\Lambda$ permutes $\Lfield$,
$\Lambda X\ne0$.  Lemma~7
of~\cite{LiZhouLiQu2022} states, under the hypotheses above, that
$H+\beta I$ permutes $\Lfield$ for every $\beta\in K$.  Taking
$\beta=1+v$ gives
\begin{equation}
 \Lambda X+vX=(H+(1+v)I)X\ne0.
 \label{eq:norm-second-nonzero}
\end{equation}
Lemma~\ref{lem:norm-b-rank}, together with invertibility of $C$, now
proves both rank statements in \eqref{eq:norm-rank-split}.  The APN
theorem cited above gives
$\ker_{\mathbb F_2}D_XF=\mathbb F_2X$, so
Theorem~\ref{thm:normal-form} applies.

The kernel description in Lemma~\ref{lem:norm-b-rank} gives the choices
in \eqref{eq:norm-RS}: $B_XR_X=0$ and
\[
 A_XS_X^\sigma=b(\Lambda X,H(S_X))
 =b(\Lambda X,\Lambda X)=0.
\]
Finally, symmetry and alternation of $b$ give
\begin{align*}
 A_XX^\sigma&=b(\Lambda X,HX)=b(HX,X),\\
 B_XX&=b(\Lambda X+vX,X)=b(HX,X).
\end{align*}
This proves the formula for $T_X$; the formulas for $P_X,Q_X$ are direct
substitutions into \eqref{eq:norm-derivative}.  Corollary
\ref{cor:exact-coordinate} proves \eqref{eq:norm-normalization}.
\end{IEEEproof}

The distinction from the pure specialization is exact.  Since
$\Lambda=H+I$, expansion of the norm form gives
\begin{equation}
 F(X)=Q_0(HX)+b(HX,X)+(1+v)Q_0(X).
 \label{eq:norm-boundary}
\end{equation}
Thus the first frame vector $T_X=b(HX,X)$ is not identified with $F(X)$
by an Euler split.  Equation \eqref{eq:norm-normalization} therefore does
not imply $\pi_F(X)^TF(X)=1$.  We draw no automatic permutation,
dual-coordinate bijection or triprojective determinant conclusion for this
family.  The result is a second natural realization of the operator theorem,
not a proof that the two APN constructions are inequivalent under arbitrary
binary EA or CCZ maps.

In the first common binary dimension $3m=9$, an exhaustive calculation does
separate every admissible instance of the two parameterizations by the
differential spectrum of its orthoderivative.  This finite result and its
EA/CCZ justification are stated in Appendix~\ref{app:n9}; no conclusion for
$m>3$ is claimed.

\section{Fourier-Side Consequences and Structural Boundaries}
\label{sec:walsh}

The normal form fixes the extension-field coordinate attached to each
derivative direction.  The abstract incidence and support mechanism is
classical in equivalent form~\cite[Sec.~4.2 and Th.~13(v)]{CarletCharpinZinoviev1998};
the point of this section is therefore not a new support theorem, but the exact
$K$-valued label supplied by Theorem~\ref{thm:apn-normalization} and, for
the triprojective family, the explicit dual frame in
Theorem~\ref{thm:determinant}.

\begin{theorem}[Radical incidence and pointwise Walsh support]
\label{thm:walsh-support}
Let $F$ satisfy the hypotheses of Theorem~\ref{thm:apn-normalization}, and
assume $m$ is odd.  Then
\begin{equation}
 \{(V,X):V\ne0,\ X\in\Rad(B_V)\setminus\{0\}\}
 =\{(\pi_F(X),X):X\ne0\}.
 \label{eq:radical-incidence}
\end{equation}
For every $X\ne0$ and $U\in K^3$,
\begin{equation}
 \boxed{\ W_F(U,\pi_F(X))\ne0
 \quad\Longleftrightarrow\quad \Tr(U^TX)=1\ }.
 \label{eq:pointwise-walsh}
\end{equation}
Whenever it is nonzero,
\begin{equation}
 |W_F(U,\pi_F(X))|=2^{(3m+1)/2}.
 \label{eq:walsh-amplitude}
\end{equation}
\end{theorem}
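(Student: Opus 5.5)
The proof will rest on Lemma~\ref{lem:pi-bijection}, applied in the odd binary dimension $N=3m$, together with the standard computation of the square of a Walsh coefficient of a quadratic function. For \eqref{eq:radical-incidence}, the argument runs as follows. By \eqref{eq:component-polar} and \eqref{eq:adjoint-image}, $X\in\Rad(B_V)$ means exactly $\Tr(V^TD_XF(Y))=0$ for all $Y$, that is, $V\in(\operatorname{im}D_XF)^{\perp_{\Tr}}$. For $X\ne0$, by \eqref{eq:orthoderivative} this holds with $V\ne0$ if and only if $V=\pi_F(X)$. Hence the two sets coincide. Lemma~\ref{lem:pi-bijection} adds that every nonzero component has a one-dimensional radical, so $\Rad(B_{\pi_F(X)})=\mathbb F_2X$.

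For the Walsh statement, fix $V=\pi_F(X)$ and expand $W_F(U,V)^2$ as a double sum over $Y,Z$. Substituting $Z=Y+D$ and using quadraticity, $q_V(Y+D)=q_V(Y)+q_V(D)+B_V(Y,D)$. Summing over $Y$ gives
\[
W_F(U,V)^2=2^{3m}\sum_{D\in\Rad(B_V)}(-1)^{\Tr(U^TD)+q_V(D)}.
\]
The radical is $\{0,X\}$, so the sum equals $1+(-1)^{\Tr(U^TX)+q_V(X)}$.

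The crux is the value $q_V(X)=\Tr(\pi_F(X)^TF(X))$. This is where the exact normalization enters: Theorem~\ref{thm:apn-normalization} gives $\pi_F(X)^TF(X)=1$ in $K$, and since $m$ is odd, $\Tr(1)=1$. So $W_F(U,V)^2=2^{3m}\bigl(1+(-1)^{\Tr(U^TX)+1}\bigr)$. This is zero when $\Tr(U^TX)=0$ and equals $2^{3m+1}$ when $\Tr(U^TX)=1$. That proves \eqref{eq:pointwise-walsh}, and taking square roots gives \eqref{eq:walsh-amplitude}; the right-hand side is an integer power of two because $3m+1$ is even.

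I expect no serious obstacle. The one point to handle carefully is the sign bookkeeping in the square expansion: the cross term must be $B_V(Y,D)$ with $D$ ranging over the radical, and one must confirm that the restriction of $q_V$ to the radical is the linear map $D\mapsto q_V(D)$, which is automatic here since the radical has only one nonzero element.
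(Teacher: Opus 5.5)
Your proof is correct. For the radical incidence it matches the paper: radical membership is exactly $V\in\ker D_XF^*$, and Lemma~\ref{lem:pi-bijection} in binary dimension $3m$ cuts the radical down to $\mathbb F_2X$.

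For the Walsh part you take a different route.

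\textbf{The paper's route.} It works with $W_F(U,\pi_F(X))$ itself, not its square. Translating the sum by $X$ gives
\[
W_F=(-1)^{\Tr(U^TX)+q_V(X)}\,W_F,
\]
which forces vanishing when the exponent is $1$. For the converse it splits $K^3$ as $\mathbb F_2X$ plus a binary complement. The sum then factors as $\bigl(1+(-1)^{\Tr(U^TX)+q_V(X)}\bigr)$ times the character sum of a nondegenerate quadratic form on a $(3m-1)$-dimensional space. It then quotes the standard Gauss-sum magnitude $2^{(3m-1)/2}$.

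\textbf{Your route.} Your identity
\[
W_F(U,V)^2=2^{3m}\sum_{D\in\Rad(B_V)}(-1)^{\Tr(U^TD)+q_V(D)}
\]
gives both the support criterion and the exact amplitude in one computation. It needs no choice of complement and no cited Gauss-sum fact; in effect it is the usual proof of that fact. So your argument is more self-contained.

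\textbf{What the paper's route buys.} Its factorization describes $W_F$ itself rather than $W_F^2$: the radical factor $2$ times a nondegenerate Walsh value. This keeps sign information that your argument discards, though the theorem does not need it.

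Both routes rest on the same two inputs: $\Rad(B_{\pi_F(X)})=\{0,X\}$, and $q_{\pi_F(X)}(X)=\Tr(1)=1$ for odd $m$.

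One small precision. Your ``summing over $Y$'' step uses that $Y\mapsto B_V(Y,D)$ is $\mathbb F_2$-linear. This holds because the polar form is symmetric and linear in its second argument. Your closing remark about linearity of $q_V$ on the radical is not needed, since $q_V(0)=0$.
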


\begin{IEEEproof}
Equation \eqref{eq:component-polar} shows that
$X\in\Rad(B_V)$ exactly when $V\in\ker D_XF^*$.  For $X\ne0$, the latter
kernel has unique nonzero element $\pi_F(X)$, proving
\eqref{eq:radical-incidence}.  Lemma~\ref{lem:pi-bijection} shows in
addition that the radical of the component indexed by $\pi_F(X)$ is
exactly $\mathbb F_2X$.

Put $Q_X(Y)=\Tr(\pi_F(X)^TF(Y))$.  Translating the Walsh sum by $X$ and
using $X\in\Rad(B_{\pi_F(X)})$ gives
\[
 W_F(U,\pi_F(X))
 =(-1)^{\Tr(U^TX)+Q_X(X)}W_F(U,\pi_F(X)).
\]
It follows that the coefficient vanishes if
$\Tr(U^TX)+Q_X(X)=1$.  Conversely, choose a binary complement $W$ to
$\mathbb F_2X$.  The sum factors into the sum over the radical and the
Walsh transform of the induced nondegenerate quadratic form on the
even-dimensional space $W$.  When
$\Tr(U^TX)+Q_X(X)=0$, the radical factor is two and the nondegenerate
quadratic sum is nonzero.  This proves that nonvanishing is equivalent to
$\Tr(U^TX)=Q_X(X)$.

By Theorem~\ref{thm:apn-normalization} and oddness of $m$,
\[
 Q_X(X)=\Tr(\pi_F(X)^TF(X))=\Tr(1)=1,
\]
which yields \eqref{eq:pointwise-walsh}.  The induced nondegenerate form
has binary dimension $3m-1$; the standard quadratic Gauss-sum magnitude is
$2^{(3m-1)/2}$.  Including the radical factor two proves
\eqref{eq:walsh-amplitude}.
\end{IEEEproof}

The magnitude \eqref{eq:walsh-amplitude} is the known almost-bent amplitude
of odd-dimensional quadratic APN components
\cite{CarletCharpinZinoviev1998,BeneteauGoluboffKolschVaghasiya2026}; it is
not claimed as a new spectrum theorem.  Likewise,
\eqref{eq:pointwise-walsh} is a native $K^3$ realization of the known
support relation, not a new coding-theory consequence.

We finish by showing that the coefficient-rank assumption is a genuine
restriction.  Let $\Lfield=\mathbb F_{2^{3m}}$ over $K=\mathbb F_{2^m}$, and
let $t$ satisfy $\gcd(t,3m)=1$.  Put
$\sigma=\left.\operatorname{Frob}_2^t\right|_K$, that is,
$\sigma(x)=x^{2^t}$ for $x\in K$.  In a fixed $K$-basis of $\Lfield$, the Frobenius
map $Y\mapsto Y^{2^t}$ is $\sigma$-semilinear, so
\begin{equation}
 Y^{2^t}=CY^\sigma
 \quad\text{for some }C\in\operatorname{GL}(3,K).
 \label{eq:gold-frobenius-matrix}
\end{equation}

\begin{proposition}[Gold full-rank boundary]
\label{prop:gold-boundary}
For the Gold map $G(z)=z^{2^t+1}$ in the natural cubic-extension model
above, the derivative split
\begin{equation}
 D_XG(Y)=M_XC Y^\sigma+M_{X^{2^t}}Y
 \label{eq:gold-split}
\end{equation}
has
\[
 \operatorname{rank}_K(M_XC)=
 \operatorname{rank}_K(M_{X^{2^t}})=3
\]
for every $X\ne0$, where $M_z$ is multiplication by $z$ on $\Lfield$.
Consequently, Theorem~\ref{thm:normal-form} does not apply in this
representation.
\end{proposition}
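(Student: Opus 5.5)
We first derive the split \eqref{eq:gold-split} directly. Polarizing $G(z)=z^{2^t+1}$ gives
\[
 D_XG(Y)=(X+Y)^{2^t+1}+X^{2^t+1}+Y^{2^t+1}=XY^{2^t}+X^{2^t}Y.
\]
The map $Y\mapsto XY^{2^t}$ is multiplication $M_X$ composed with the Frobenius $Y\mapsto Y^{2^t}$. By \eqref{eq:gold-frobenius-matrix} this Frobenius equals $CY^\sigma$ in the fixed $K$-basis, so the first term is $M_XCY^\sigma$. The second term is $M_{X^{2^t}}Y$. Both $M_X$ and $M_{X^{2^t}}$ are $K$-linear, since multiplication by an element of $\Lfield$ commutes with scalars in $K\subseteq\Lfield$. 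So in coordinates they are matrices in $M_3(K)$. The two-term decomposition lemma of Section~\ref{sec:preliminaries} then shows that $M_XC$ and $M_{X^{2^t}}$ are the unique coefficient matrices of $D_XG$ in this representation.

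Next we compute the ranks. For $X\ne0$, $M_X$ is a bijection of $\Lfield$ with inverse $M_{X^{-1}}$, so its $K$-matrix lies in $\operatorname{GL}(3,K)$. Likewise $X^{2^t}\ne0$, so $M_{X^{2^t}}$ is invertible. The matrix $C$ is invertible because Frobenius is a bijection of $\Lfield$; this is already recorded in \eqref{eq:gold-frobenius-matrix}. Hence $M_XC$ is a product of invertible matrices and has rank $3$, and $M_{X^{2^t}}$ also has rank $3$.

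Theorem~\ref{thm:normal-form} requires both coefficient ranks to equal two. Both are three in this representation, so the theorem does not apply.

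The only point needing care is to confirm that $Y\mapsto Y^{2^t}$ really is $\sigma$-semilinear over $K$ with $\sigma=\operatorname{Frob}_2^t|_K$, so that $C$ exists and is invertible. This follows from $(\lambda Y)^{2^t}=\lambda^{\sigma}Y^{2^t}$ for $\lambda\in K$. The condition $\gcd(t,3m)=1$ is not needed for the rank claim itself; it only guarantees that $G$ is APN and that $\sigma$ has fixed field $\mathbb F_2$. No step is a real obstacle.
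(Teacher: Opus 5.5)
Your proposal is correct and follows essentially the same route as the paper: polarize to get $XY^{2^t}+X^{2^t}Y$, rewrite the Frobenius term via $C$, and conclude full rank because multiplication by a nonzero element of $\Lfield$ and $C$ are invertible. Your extra appeal to the two-term decomposition lemma, which shows $M_XC$ and $M_{X^{2^t}}$ are the only possible coefficient matrices in this representation, makes the ``does not apply'' conclusion slightly more explicit than the paper's one-line argument.
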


\begin{IEEEproof}
Polarization gives
$D_XG(Y)=X^{2^t}Y+XY^{2^t}$, which is
\eqref{eq:gold-split} by \eqref{eq:gold-frobenius-matrix}.  Multiplication
by a nonzero field element is an invertible $K$-linear map, and $C$ is
invertible.  Both displayed coefficient operators are therefore
nonsingular.
\end{IEEEproof}

The proposition is representation-specific: it delimits the natural
$\Lfield/K$ decomposition of the Gold map.  It does not assert that coefficient
ranks are invariant under arbitrary binary equivalence or under every
possible extension-field presentation.

\section{Conclusion}
\label{sec:conclusion}

For a fixed nontrivial Frobenius automorphism $\sigma$ with fixed field
$\mathbb F_2$, we classified binary-linear two-term Frobenius-linearized
operators $L(Y)=AY^\sigma+BY$ on $K^3$ whose two coefficient matrices have $K$-rank
two and whose full binary kernel has dimension one.  Up to invertible
$K$-linear input and output changes, every such operator is the canonical
model
\[
 (\alpha,\beta,\gamma)\longmapsto
 (\alpha^\sigma+\alpha,\beta^\sigma,\gamma).
\]
The coefficient-kernel directions provide the frames that prove the
classification, while the first row of the inverse output frame is exactly
the unique nonzero binary trace normal and is normalized by a $K$-valued
pairing.

For pure $\sigma$-quadratic APN derivatives, the first output-frame vector is
$F(X)$.  The operator normal form therefore yields
$\pi_F(X)^TF(X)=1$, from which the odd-degree permutation result and the
bijection $[X]\mapsto[\pi_F(X)]$ to the dual projective plane follow.  The
Göloğlu--Kölsch triprojective family realizes this pure specialization and
admits additional determinant and full-dual-frame identities.  The
Li--Zhou--Li--Qu norm-twist family realizes the same abstract operator class
through a different first output vector, showing that those pure conclusions
are not consequences of rank-$(2,2)$ alone.

The natural cubic-extension representation of Gold maps has coefficient-rank
pair $(3,3)$, so the normal form describes a proper structural subclass rather
than APN derivatives in general.  Within the pure subclass, the exact normal
also supplies an extension-field coordinate for the known binary
component-radical and Walsh-support relations.  A remaining structural
question is which further APN constructions admit rank-deficient coefficient
splittings that fall into this or related linearized normal-form classes.

\section*{Conflict of Interest}
None of the authors have a conflict of interest to disclose.

\appendices
\section{Finite $n=9$ Cross-Family Separation}
\label{app:n9}

This appendix records a finite machine-assisted result; it is not used in
the proof of any infinite theorem.  Extend every orthoderivative by
$\pi_F(0)=0$ and define
\[
 \delta_{\pi_F}(a,b)=
 |\{x:\pi_F(x+a)+\pi_F(x)=b\}|.
\]
The \emph{orthoderivative differential spectrum} is the multiset of these
integers over all $a\ne0$ and all $b$.

If quadratic APN maps $F$ and $G$ are EA-equivalent, polarization of the
EA relation gives
\[
 B_G(x,y)=A B_F(Bx,By)
\]
where $A$ and $B$ are the linear parts of the output and input affine
permutations, respectively.  Taking the unique derivative-image
normals yields
\begin{equation}
 \pi_G(x)=A^{-T}\pi_F(Bx).
 \label{eq:pi-ea-covariance}
\end{equation}
Therefore their orthoderivatives are linearly equivalent and have the same
differential spectrum.  This invariant viewpoint is consistent with the
orthoderivative-based EA methods of
\cite{CanteautCouvreurPerrin2022,Kaleyski2022}.  Yoshiara proves that two
quadratic APN functions are CCZ-equivalent if and only if they are
EA-equivalent~\cite[Th.~1]{Yoshiara2012}.

\begin{proposition}[Exhaustive separation at $n=9$]
\label{prop:n9-separation}
At $m=3$, every admissible triprojective parameter instance over
$\mathbb F_8^3$ is binary EA-inequivalent, and hence CCZ-inequivalent, to
every admissible Li--Zhou--Li--Qu norm-twist parameter instance over
$\mathbb F_{512}$.
\end{proposition}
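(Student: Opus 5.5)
The statement is a finite, machine-assisted claim, so the plan is an exhaustive computation followed by an invariant comparison via \eqref{eq:pi-ea-covariance}.

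\textbf{Step 1: enumerate the triprojective instances.} With $m=3$, the condition $\gcd(k,3)=1$ forces $k\in\{1,2\}$. For each such $k$, I will enumerate all $(a,b,c)\in\mathbb F_8^*\times\mathbb F_8\times\mathbb F_8$ for which $p$ in \eqref{eq:admissibility-polynomial} has no root in $\mathbb F_8$. This is at most $2\cdot 7\cdot 64$ cases, each a check of eight values. For each admissible instance, I build $F$ from \eqref{eq:target-family} as a table on $\mathbb F_2^9$ after fixing a binary basis of $\mathbb F_8$.

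\textbf{Step 2: enumerate the norm-twist instances.} With $m=3$, $\Lfield=\mathbb F_{512}$, and $\gcd(s,3)=1$, only $s\bmod 9$ matters, so $s\in\{1,2,4,5,7,8\}$. For each such $s$, I will enumerate all $v\in\mathbb F_8^*$ and all $\mu\in\mathbb F_{512}$ with $N(\mu)\ne1$ and $\Lambda=(\rho+M_\mu)\tau+I$ a permutation. I will test the permutation condition directly as $\mathbb F_2$-linear invertibility of a $9\times9$ matrix. This is about $6\cdot7\cdot512$ cases.

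\textbf{Step 3: compute the orthoderivatives and their spectra.} For each function in either list, and for each nonzero $x$, I compute the binary $9\times9$ matrix of $D_xF$. I obtain $\pi_F(x)$ as the nonzero kernel vector of its transpose (for the trace pairing), and I verify APN along the way by checking that the kernel of $D_xF$ equals $\{0,x\}$. From $\pi_F$ I then compute the orthoderivative differential spectrum.

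\textbf{Step 4: compare and conclude.} The key claim to establish is that the set of spectra from the first family is disjoint from the set from the second. Given that disjointness, \eqref{eq:pi-ea-covariance} shows that EA-equivalence would force equal spectra, and \cite[Th.~1]{Yoshiara2012} upgrades EA-inequivalence to CCZ-inequivalence.

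\textbf{Main obstacle and fallback.} The main risk is that the spectra coincide for some pair. I will first reduce by grouping instances into classes of identical spectra, and check whether any triprojective class collides with any norm-twist class. If a collision occurs, I will fall back on a finer invariant for the colliding classes. The first choice is the extended Walsh spectrum of $\pi_F$. The second is the $\Gamma$-rank of $\pi_F$, which is an invariant of linear equivalence by the covariance \eqref{eq:pi-ea-covariance}.

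The remaining concern is trusting the computation itself. I will cross-check it by confirming that each computed $\pi_F$ satisfies $\pi_F(x)^TF(x)=1$ for the pure family (Theorem~\ref{thm:apn-normalization}). I will also confirm that $\pi_F$ is a bijection on nonzero vectors, as Lemma~\ref{lem:pi-bijection} requires in the odd dimension $9$.
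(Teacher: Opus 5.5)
Your plan is the paper's proof step for step: the same parameter ranges ($k\in\{1,2\}$ with $a\in\mathbb F_8^*$, $b,c\in\mathbb F_8$; $s\in\{1,2,4,5,7,8\}$ with all admissible $\mu$ and $v\in\mathbb F_8^*$), binary elimination for $\pi_F$ together with an APN/rank-eight check, the orthoderivative differential spectrum (with the convention $\pi_F(0)=0$, which you should fix explicitly), and then covariance \eqref{eq:pi-ea-covariance} plus Yoshiara's theorem. The one thing your proposal leaves open, disjointness of the two sets of spectra, is exactly what the paper's computation reports: $292$ triprojective instances giving $2$ spectra and $5292$ norm-twist instances giving $8$ spectra, with no spectrum shared, so your fallback invariants are never needed.
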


\begin{IEEEproof}
Exact polynomial-basis arithmetic exhausts both parameter spaces.  On the
triprojective side, both $k=1,2$, every $a\in\mathbb F_8^*$, and every
$b,c\in\mathbb F_8$ are tested against
\eqref{eq:admissibility-polynomial}, leaving $292$ instances.  On the
norm-twist side,
\[
 s\in\{1,2,4,5,7,8\},
\]
all $\mu\in\mathbb F_{512}$ satisfying the norm condition and the
permutation condition on $\Lambda$, and all $v\in\mathbb F_8^*$ are exhausted,
leaving $5292$ instances.

For each truth table, exact binary Gaussian elimination recovers the unique
normal to every nonzero derivative image; the implementation also checks
binary derivative rank eight.  It then counts all
$511\cdot512$ values $\delta_{\pi_F}(a,b)$.  The $292$ target instances
give two distinct spectra, each with multiplicity $146$.  The $5292$ norm
instances give eight spectra, four with multiplicity $189$ and four with
multiplicity $1134$.  The two sets of spectra are disjoint.  The invariance
proved in \eqref{eq:pi-ea-covariance} gives EA separation for every cross
pair, and Yoshiara's theorem gives CCZ separation.
\end{IEEEproof}

The computation uses the irreducible moduli $0\mathrm{x}b$ for
$\mathbb F_8$ and $0\mathrm{x}211$ for $\mathbb F_{512}$.  Independent
cross-checks repeat the full target enumeration with modulus $0\mathrm{x}d$,
repeat the complete $s=1$ norm slice with modulus $0\mathrm{x}203$, verify
normal annihilation and APN kernels directly for a fixed cross pair, and
verify invariance under a nontrivial deterministic EL copy.  The exact
enumerator and regression tests are provided in the ancillary files accompanying this preprint.  These facts
establish Proposition~\ref{prop:n9-separation} only for $m=3$; no
parameter-uniform EA or CCZ separation is claimed.

\bibliographystyle{IEEEtran}
\bibliography{references}

@article{CarletCharpinZinoviev1998,
  author  = {Claude Carlet and Pascale Charpin and Victor Zinoviev},
  title   = {Codes, Bent Functions and Permutations Suitable for {DES}-Like Cryptosystems},
  journal = {Designs, Codes and Cryptography},
  volume  = {15},
  number  = {2},
  pages   = {125--156},
  year    = {1998},
  doi     = {10.1023/A:1008344232130}
}

@article{CanteautCouvreurPerrin2022,
  author  = {Anne Canteaut and Alain Couvreur and Leo Perrin},
  title   = {Recovering or Testing Extended-Affine Equivalence},
  journal = {IEEE Transactions on Information Theory},
  volume  = {68},
  number  = {9},
  pages   = {6187--6206},
  year    = {2022},
  doi     = {10.1109/TIT.2022.3166692}
}

@inproceedings{CouvreurCanteautPerrin2024,
  author    = {Alain Couvreur and Anne Canteaut and L{\'e}o Perrin},
  title     = {On the Properties of the {Ortho-Derivatives} of Quadratic Functions},
  booktitle = {{WCC} 2024---The Thirteenth International Workshop on Coding and Cryptography},
  address   = {Perugia, Italy},
  month     = jun,
  year      = {2024}
}

@misc{GologluKolschTriprojective2026,
  author        = {Faruk G{\"o}lo\u{g}lu and Lukas K{\"o}lsch},
  title         = {Triprojective Almost Perfect Nonlinear Permutations and Functions},
  year          = {2026},
  eprint        = {2605.17545},
  archiveprefix = {arXiv},
  primaryclass  = {math.CO},
  note          = {arXiv:2605.17545, version 1}
}

@article{GologluKolschSemifields2026,
  author  = {Faruk G{\"o}lo\u{g}lu and Lukas K{\"o}lsch},
  title   = {Commutative Semifields from Bijections of the {Desarguesian} Plane},
  journal = {Journal of the London Mathematical Society},
  volume  = {114},
  number  = {1},
  pages   = {e70635},
  year    = {2026},
  doi     = {10.1112/jlms.70635}
}

@article{BeneteauGoluboffKolschVaghasiya2026,
  author  = {Sophie Hannah B{\'e}n{\'e}teau and Nicolas Goluboff and Lukas K{\"o}lsch and Divyesh Vaghasiya},
  title   = {On the {Walsh} Spectra of Quadratic {APN} Functions},
  journal = {IEEE Transactions on Information Theory},
  volume  = {72},
  number  = {7},
  pages   = {5207--5216},
  year    = {2026},
  doi     = {10.1109/TIT.2026.3695003}
}

@article{VanDeVoorde2016,
  author  = {Geertrui Van de Voorde},
  title   = {Desarguesian Spreads and Field Reduction for Elements of the Semilinear Group},
  journal = {Linear Algebra and its Applications},
  volume  = {507},
  pages   = {96--120},
  year    = {2016},
  doi     = {10.1016/j.laa.2016.05.038}
}

@article{DempwolffFisherHerman2000,
  author  = {U. Dempwolff and J. Chris Fisher and Allen Herman},
  title   = {Semilinear Transformations over Finite Fields Are {Frobenius} Maps},
  journal = {Glasgow Mathematical Journal},
  volume  = {42},
  number  = {2},
  pages   = {289--295},
  year    = {2000},
  doi     = {10.1017/S0017089500020164}
}

@article{McGuireSheekey2019,
  author  = {Gary McGuire and John Sheekey},
  title   = {A Characterization of the Number of Roots of Linearized and Projective Polynomials in the Field of Coefficients},
  journal = {Finite Fields and Their Applications},
  volume  = {57},
  pages   = {68--91},
  year    = {2019},
  doi     = {10.1016/j.ffa.2019.02.003}
}

@article{LiZhouLiQu2022,
  author  = {Kangquan Li and Yue Zhou and Chunlei Li and Longjiang Qu},
  title   = {Two New Families of Quadratic {APN} Functions},
  journal = {IEEE Transactions on Information Theory},
  volume  = {68},
  number  = {7},
  pages   = {4761--4769},
  year    = {2022},
  doi     = {10.1109/TIT.2022.3157810}
}

@article{BartoliCalderiniPolverinoZullo2022,
  author  = {Daniele Bartoli and Marco Calderini and Olga Polverino and Ferdinando Zullo},
  title   = {On the Infiniteness of a Family of {APN} Functions},
  journal = {Journal of Algebra},
  volume  = {598},
  pages   = {68--84},
  year    = {2022},
  doi     = {10.1016/j.jalgebra.2022.01.026}
}

@article{Yoshiara2012,
  author  = {Satoshi Yoshiara},
  title   = {Equivalences of Quadratic {APN} Functions},
  journal = {Journal of Algebraic Combinatorics},
  volume  = {35},
  number  = {3},
  pages   = {461--475},
  year    = {2012},
  doi     = {10.1007/s10801-011-0309-1}
}

@article{Kaleyski2022,
  author  = {Nikolay Kaleyski},
  title   = {Deciding {EA}-Equivalence via Invariants},
  journal = {Cryptography and Communications},
  volume  = {14},
  number  = {2},
  pages   = {271--290},
  year    = {2022},
  doi     = {10.1007/s12095-021-00513-y}
}

@article{Berson2014,
  author  = {Joost Berson},
  title   = {Linearized Polynomial Maps over Finite Fields},
  journal = {Journal of Algebra},
  volume  = {399},
  pages   = {389--406},
  year    = {2014},
  doi     = {10.1016/j.jalgebra.2013.10.013}
}

@article{WuLiu2013,
  author  = {Baofeng Wu and Zhuojun Liu},
  title   = {Linearized Polynomials over Finite Fields Revisited},
  journal = {Finite Fields and Their Applications},
  volume  = {22},
  pages   = {79--100},
  year    = {2013},
  doi     = {10.1016/j.ffa.2013.03.003}
}

\end{document}